\documentclass[letterpaper]{article} 
\usepackage{aaai2027}  
\usepackage[hyphens]{url}  
\usepackage{graphicx} 
\usepackage{natbib}  
\usepackage{caption} 
\usepackage{algorithm}
\usepackage{algorithmic}

\usepackage{newfloat}
\usepackage{listings}
\DeclareCaptionStyle{ruled}{labelfont=normalfont,labelsep=colon,strut=off} 
\floatstyle{ruled}
\newfloat{listing}{tb}{lst}{}
\floatname{listing}{Listing}

\usepackage{booktabs}

\nocopyright

\usepackage{multirow}
\usepackage[dvipsnames]{xcolor}
\usepackage{amsmath}
\usepackage{amssymb}
\usepackage{mathabx}
\usepackage{bm}

\usepackage{tikz}
\usetikzlibrary{arrows.meta, positioning, shapes.geometric}

\usepackage{amsthm}
\usepackage{thm-restate}

\usepackage{thmtools}

\newtheorem{lemma}{Lemma}

\newtheorem{definition}{Definition}

\ifdefined\DRAFT
    \newcommand{\eden}[1]{\textcolor{red}{(Eden says: #1)}}
    \newcommand{\ya}[1]{\textcolor{purple}{(YA says: #1)}}

\else
    \newcommand{\eden}[1]{}
    \newcommand{\ya}[1]{}

\fi

\newcommand{\myparagraph}[1]{\vspace{3pt} \noindent \textbf{#1}}

\title{Who Is Really Playing? Strategic Interaction in AI-Guided Populations}
\author{Jonathan Shaki, Eden Hartman, Sarit Kraus, and Yonatan Aumann\thanks{ Contact: \texttt{jonathshaki@gmail.com, eden.hartman@biu.ac.il, sarit@cs.biu.ac.il, aumann@cs.biu.ac.il}.}}
\affiliations{
    Bar-Ilan University, Israel
}

\begin{document}

\maketitle


\begin{abstract}

 AI systems in general, and Large language models (LLMs), in particular, are increasingly used to provide instructions to many agents who interact with one another. Such shared reliance couples agents who appear to act independently: they may in fact be guided by a common model. This coupling can change the prospects for cooperation among agents with misaligned incentives. We study settings in which multiple \emph{guidance providers} each advise a population of clients who participate in instances of an underlying game, creating strategic interaction at the level of the providers themselves. This induces a meta-game among the providers, mediated through clients. We first analyze the one-shot setting, where we show that shared instructions can change equilibrium behavior only when some provider influences more than one role in the same interaction.  In such cases, cooperation may emerge, and the effect of client share can be beneficial, harmful, or non-monotone, depending on the base game. For the repeated setting, we prove a folk theorem for guidance providers: despite indirect observation and the clients' inability to identify which LLM advised their opponents, all feasible and individually rational outcomes can be sustained as $\varepsilon$-equilibria. 
    
\end{abstract}



\section{Introduction}
A basic premise in the study of social choice, and game theory in general, is that agents are independent strategic actors. They may have limited information and bounded rationality, but the model usually begins with agents who choose their own actions. However, this premise is becoming less descriptive of many strategic environments. Increasingly, agents do not decide in isolation. Their behavior is shaped by external systems that guide, recommend, rank, optimize, generate, or directly instruct actions.

This shift is most visible today with AI-based decision support, and especially with large language models. 
Rather than acting independently, many agent, whether artificial or human, now rely on a small number of LLMs for instructions that shape their actions~\cite{chatterji2025people,handler2024large,cao2026more}. For example, multiple software agents may be powered by the same LLM~\cite{wang2024surveyagents,guo2024largelanguagemodelbased}, different applications may incorporate code generated by a common model \cite{jiang2024surveycodellm}, and diverse human users may consult the same LLM when making decisions~\cite{chatterji2025people}.

Another example is programmatic advertising. Advertisers commonly delegate
auction-level bidding to a demand-side platform (DSP), which manages many
campaigns and selects the campaign and bid for each impression, potentially
across hundreds of billions of auctions per day~\cite{grigas2026dsp}.
Advertisers remain the decision makers at the campaign level: they specify
budgets, targeting criteria, and performance or ROI goals, while the DSP
executes the bidding policy~\cite{deng2023multichannel,grigas2026dsp}.
DSP-mediated demand is highly concentrated, with most buying passing through
a small number of intermediaries~\cite{cma2020market}. A single DSP may
therefore control bidding for many advertisers, including multiple buyers
eligible for the same impression.
More generally, similar forms of guidance arise through recommender systems, navigation systems, voting guides, financial tools, expert platforms, and organizational decision aids. In all these cases, the formal decision may remain with the agent, but the effective choice process is essentially delegated to external source, at least for some of the users. 

The crucial issue is not that agents receive help. It is that many agents rely on the same source of guidance. When this happens, agents who appear to be distinct strategic actors may become coupled through the instructions they receive. Their behavior may become correlated even when they do not communicate, and even when their interests are not aligned. 
This gives AI guidance providers a new form of strategic power. When multiple providers coexist, the original game among individual agents induces a higher-level strategic game among their guidance providers.

This is the setting we study. We model AI guidance providers as strategic actors that provide instructions to populations of clients, who then interact in instances of an underlying game. Clients who rely on the same model are coupled through the instructions they receive. Interactions among clients advised by different models therefore induce a \emph{meta-game} among the providers, mediated through clients. 
In this paper, we study the dynamics of such environments: How does shared reliance of providers affect strategic behavior? When and how can it facilitate cooperation among agents with misaligned interests?

Arguably, today's LLMs and decision support systems are typically optimized to produce useful responses at the individual user's level.  Here, we take a forward-looking perspective: as LLMs become persistent providers of instructions to large populations of agents, optimization at the level of individual queries may no longer be the only relevant consideration. Providers may increasingly care about objectives defined at the population level, such as aggregate client performance, resource usage, or market position. In such settings, a model may be optimized not only to answer each client in isolation, but also to shape the aggregate behavior of the clients it advises, and beyond.

\myparagraph{Contributions.}
We make the following contributions:
{\setlength{\leftmargini}{1em}
\begin{itemize}
    \item \textbf{A Model of Interaction With Multiple Guidance Providers/LLMs.} We introduce a formal framework in which multiple providers issue instructions to populations of clients that participate in many instances of a \emph{base game}. The model captures the coupling induced by shared instructions, while also specifying how information about play reaches the providers through their clients.

    \item \textbf{One-Shot Equilibria.}
    In the one-shot setting, we show that shared instructions can change equilibrium behavior, but only when some provider may effectively influence more than one role in the same interaction. Absent such overlap, the meta-game collapses to the base game. Once a provider may guide several roles, new equilibria may arise, including cooperation in games such as the Prisoner's Dilemma. More generally, the set of equilibria depends both on the base game and on how clients are distributed across providers. In particular, client share does not have a fixed effect on utility: depending on the game, a larger share may be beneficial, harmful, or dominated by an intermediate share.

    \item \textbf{A Folk-Theorem-Type Result in the Repeated Setting.} In the repeated setting, we establish a general folk-theorem style result \cite{aumann1981survey,fudenberg1986folk}: any outcome that guarantees each provider no less than what it can unilaterally guarantee to itself can be sustained via long-run incentives. In particular, a wide range of cooperative outcomes among competing providers, such as cooperation in the prisoner’s dilemma, are sustainable, which, in turn, would induce cooperation among their clients.
     The theorem, while in the tradition of previous folk theorems, and \cite{fudenberg1994folk} in particular, does not follow from any of the existing ones (see Section~\ref{sec:related} for further comparison). 
\end{itemize}




\subsection{Related Work}
\label{sec:related}
Our work lies at the intersection of three lines of research: cooperative AI, LLMs and AI agents in strategic environments, and classic game theory. 
We discuss each in turn, focusing on how our setting differs from existing models.

Cooperative AI has gained increasing attention in recent years, studying cooperation among AI systems and between AI systems and other agents (see~\cite{dafoe2020open,dafoe2021cooperative,trabelsi2026pro}).
Closer to our work is cooperative AI by means of ``cooperation between copies'' or near-copies of the same AI system~\cite{conitzer2023foundations,wolfe2023multicopy}. 
Our modeling is different, however, as we do not consider separate, self-reasoning (selfish) agents running copies of the same code, but rather \emph{projections} of a single strategic entity, the central AI (/LLM), possibly running different code (see~\cite{conitzer2019designing} on the matter of AI identity). 
This allows us to circumvent the causal decision theory versus evidential decision theory issue~\cite{lewis1981causal,ahmed2021evidential}, and employ classic game theoretic methodology.
A second related line of work studies LLMs and AI agents in strategic environments. 
Recent papers use game-theoretic settings to evaluate the behavior of LLMs and LLM-powered agents; e.g.,~\cite{fish2024algorithmic,sun2025game,duan2024gtbench,piatti2024cooperate,xie2024trust,xu2024werewolf,ghaemi2025survey,de2023emergent,han2024llm,tewolde2026coopeval}. 
These works show that LLM agents can exhibit nontrivial strategic behavior, but they are mainly experimental.
Apart from being theoretical, our work differs in the level of strategic interaction: whereas in most of these works the LLM agents are themselves players in the game, in our model the LLMs are strategic actors that interact only indirectly.

Our repeated-game model is related to the literature on imperfect monitoring, where players observe only partial information about past play; see, e.g.,~\cite{fudenberg1994folk,mailath2006repeated}. The main difference from the classical framework is that our induced meta-game has an \emph{infinite pure-action space}: a pure action of a guidance provider is itself a distribution over the actions recommended to its clients. The folk theorem of Fudenberg, Levine, and Maskin~\cite{fudenberg1994folk}, by contrast, assumes finite action and signal spaces. 
Laclau and Tomala~\cite{laclau2017repeated} study compact infinite action spaces with deterministic public monitoring, as in our setting.
They show that every interior admissibly feasible and \emph{jointly rational} payoff is attained as a perfect public equilibrium, whereas we show that every feasible and strictly \emph{individually rational} payoff is attained as an $\epsilon$-Nash equilibrium. 
Since joint rationality is strictly more restrictive than individual rationality in our setting, our theorem applies to a strictly larger class of target payoffs.

Our work is also related to routing games, and in particular to atomic splittable routing games, where players control divisible amounts of flow that must be routed through a shared network; e.g.,~\cite{roughgarden2015local,orda2002competitive,harks2018uniqueness,bhaskar2018equilibrium,bhaskar2015uniqueness,huang2013collusion,von2010market,conitzer2006computing}. 
As in our model, the strategic actors can be viewed as controlling the behavior of many smaller units, and choosing how to allocate or operate them in order to optimize their objective. 
The main difference lies in the structure of the interaction. 
Routing games typically involve a single aggregate game in which all players participate through the shared network. 
In our setting, by contrast, each LLM guides a population of clients who participate  
in many simultaneous instances of the underlying game. 
This creates an additional layer of strategic complexity, since different instances may involve different combinations of LLMs.

Lastly, our work is related to models of mediation, delegation, program equilibrium, and strategic recommendations, where players do not act only through direct action choices, but through an intermediate actor such as a mediator, a representative, a program, or an advisory signal; e.g.,~\cite{tennenholtz2004program,monderer2009strong,ashlagi2007mediators,oesterheld2022safe,kamenica2011bayesian,arieli2019private,vasserman2015implementing,das2017reducing, wu2021value,shaki2025persuading,shaki2025bayesian}. 
These models show that introducing such an intermediate layer can change the strategic possibilities of the game. 
Our setting is different in two ways. 
First, the intermediate layer is not a neutral mediator, a fixed program chosen by a player, or one representative per player; rather, our Guidance providers are strategic actors guiding large populations of clients. 
Second, we focus on the case in which clients follow the instructions they receive, as a baseline for understanding what can be achieved through AI involvement (see Section \ref{sec:conclusion-and-future-work}).


\section{Model}
\label{sec:model}

We model a large population of clients who participate in many simultaneous instances of the same underlying game, while relying on recommendations generated by a finite set of (AI) \emph{guidance providers}. The key feature of the model is that strategic choice takes place at the level of the guidance providers: clients participate in instances of the base game by following recommendations by their respective guidance providers.

For brevity and concreteness, from now on we mostly use the term \emph{LLM} instead of \emph{guidance provider}.  The results, however, apply to any entity whose guidance/instructions are followed by a large population of clients. 

\myparagraph{Base Game and Client Populations.}
The underlying interaction is an $m$-player normal-form game
\(
B=(R,(A_i)_{i\in R},(u_i)_{i\in R}),
\)
where $R=\{1,\dots,m\}$ is the set of players, $A_i$ is the finite  action set of player $i$, and $u_i:\mathcal{A}\to\mathbb{R}$ is the utility of player $i$, with $\mathcal{A}=\prod_{i=1}^m A_i$. 
In what follows, we refer to the players of $B$ as roles, to distinguish them from the clients and LLMs.
Throughout, for a set $S$, $\Delta(S)$ denotes the set of all distributions of $S$. 

There is a unit-mass continuum of instances of the base game, played in parallel by a population of clients. As is common in economics and game theory, we model both clients and game instances as continua~\cite{aumann1964markets,schmeidler1973equilibrium,khan2002noncooperative}.\footnote{The continua modeling captures the intended large-population limit and avoids artifacts of finite populations. Most importantly, finite-population models often retain structural features that are formally true for every finite size, yet conceptually and practically absent in the large environments we seek to model. For example, in a finite population every pair of players meets infinitely often, and some analyses exploit this fact~\cite{deb2020folk}. While technically valid, such pair-specific recurrence is not a meaningful feature of the setting here, and its availability reflects the wrong abstraction rather than a genuine strategic force. The continuum model is therefore not merely cleaner, but substantively more faithful to the phenomenon we study.} 
\emph{All our results carry over to the case of a large finite population}. 

\myparagraph{Guidance Providers/LLMs.}
There is a finite set of LLMs, denoted $L\coloneq\{1,\dots,k\}$. Each client receives guidance from exactly one LLM. For each role $i$ and LLM $j$, let $p_i^j$ denote the fraction of role-$i$ clients guided by LLM $j$. These shares are common knowledge. In each game instance, the LLMs guiding the participating clients are random, with the distribution induced by the population shares $p_i^j$. Clients are assumed to know their own role and their LLM, but not which LLMs guiding the other players. In turn, this is also the information available to LLMs through their clients.

\myparagraph{Meta-Actions.}
Each client consults its LLM and receives an instruction on what strategy to employ. The collection of instructions issued by an LLM is called a \emph{meta-action}. Since the LLMs in any game are chosen at random, only the distribution of instructions in the meta-action matters, not their specific assignment to clients (see Appendix). Thus, a meta-action of LLM $j$ can be viewed as an $m$-tuple of distributions $\bm{M}^j\coloneq(M_1^j,\dots,M_m^j)$, where $M_i^j$ is a distribution over $\Delta(A_i)$.
We denote by $M_i^j(\sigma_i)$
the fraction of role-$i$ clients guided by LLM $j$ that are instructed to play strategy $\sigma_i$. 
For brevity, we consider finite support distributions, but the results carry over to the infinite case.


\myparagraph{LLM Utilities.}
The utility of an LLM is the aggregate expected utility of its clients. Because, in each role, the guiding LLM is chosen at random (according to the distribution), this utility can be written in reduced form as a function of the population shares and the meta-action distributions. 
For a meta-action profile $\bm{M}\coloneq(M^1,\dots,M^k)$, and a strategy profile $\bm{\sigma}=(\sigma_1,\dots,\sigma_m)$, let
\(\displaystyle
q_{i,j}(\bm{M},\bm{\sigma})\coloneq\sum_{g\in L^m:\, g_i=j}~\prod_{r=1}^m p_r^{g_r} M_r^{g_r}(\sigma_r)
\)
denote the population frequency of role-$i$ clients guided by $j$ for which the induced strategy profile is $\bm{\sigma}$. Then the expected utility of LLM~$j$~is
\begin{align}
U_j(\bm{M})=\sum_{i=1}^m ~\sum_{\bm{\sigma}:\, q_{i,j}(\bm{M},\bm{\sigma})>0} q_{i,j}(\bm{M},\bm{\sigma})\,u_i(\bm{\sigma}).
    \label{eq:compact-U}
\end{align}
That is, the sum of the payoffs of each role-$i$ client guided by LLM $j$ over all induced mixed-action profiles, weighted by their population frequencies.\footnote{See Section \ref{sec:conclusion-and-future-work} for discussion of this utility modeling.} 

\myparagraph{Repeated Interaction.}
We study both the one-shot version and the repeated version of the model, wherein the same meta-game is played in every period. In the repeated setting, the underlying clients may either be newly drawn each period or fixed across periods. 
After each period, LLMs observe the actions played in their clients' game instances of the period, but not which LLM guided which client in any particular instance. Thus, repeated-game strategies, denoted $S^j$ for LLM $j$, map observed histories, together with individual past meta-actions, into current meta-actions. The utility of each LLM is the discounted sum over all periods: $U^j=(1-\delta)\sum_{t=0}^{\infty} \delta^t U^j_t$, where $U^j_t$ is the utility of LLM $j$ at time $t$, and $\delta$ is the discount factor. 

The full formal model, including the continuum construction, measurability assumptions, and derivation of the reduced-form meta-actions and utility expressions, is given in the appendix.

\section{One-Shot Meta-Games}\label{sec:one-shot}
In this section, we consider the one-shot meta-game and study the emerging equilibria. Proofs are deferred to the appendix.

Let
\(\bar U^j(\bm M)\) be the average utility of clients guided by LLM~\(j\).
We start by showing that, for the purpose of characterizing equilibrium payoffs, the one-shot meta-action space can be simplified as follows. We say that a meta-action \(\bm M^j\) of LLM \(j\) is \emph{role-homogeneous} if it is a distribution over deterministic action profiles,
$\bm{M}^j \in \Delta(\mathcal{A}).$
That is, in any realization of \(\bm{M}^j\), for every role \(i\), all role-\(i\) clients of \(j\) are instructed to play the same deterministic action.

\begin{restatable}{lemma}{roleHomo}
For every equilibrium \(\bm M\) of the one-shot meta-game and every LLM \(j\), there exists an equilibrium $(\widehat{\bm{M}}^j,\bm M^{-j})$ in which $\widehat{\bm{M}}^j$ is role-homogeneous and all LLMs obtain the same payoffs as under \(\bm M\). Consequently, for every equilibrium payoff vector, there exists an equilibrium $\widehat{\bm{M}}$ inducing the same payoff vector in which every LLM uses a role-homogeneous meta-action.
\end{restatable}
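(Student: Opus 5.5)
The plan is to show that LLM $j$'s meta-action enters every utility $U_{j'}$ only through a "product expansion" that is multilinear in the per-role distributions, and then to replace each mixed instruction by a lottery over pure instructions with the same aggregate effect.

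First, we expand $U_{j'}(\bm M)$ using \eqref{eq:compact-U}. Substituting mixed profiles by their expectations over pure profiles, and using that $u_i$ is multilinear in $\bm\sigma$, the utility of any LLM $j'$ can be written as
\[
U_{j'}(\bm M)=\sum_{i=1}^m\sum_{g\in L^m:\,g_i=j'}\Big(\prod_{r=1}^m p_r^{g_r}\Big)\,u_i\big(\bar\sigma_1^{g_1},\dots,\bar\sigma_m^{g_m}\big),
\]
where $\bar\sigma_r^{h}\coloneq\sum_{\sigma_r}M_r^{h}(\sigma_r)\,\sigma_r\in\Delta(A_r)$ is the average mixed strategy LLM $h$ induces in role $r$. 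Independence holds because the guiding LLMs and the instructions in distinct roles are drawn independently. The consequence is that every LLM's payoff, for every profile, depends on $\bm M^j$ only through the marginals $(\bar\sigma_1^j,\dots,\bar\sigma_m^j)$.

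Second, we construct $\widehat{\bm M}^j\in\Delta(\mathcal A)$ with the same marginals. The natural choice is the product distribution $\widehat{\bm M}^j(a)\coloneq\prod_r\bar\sigma_r^j(a_r)$, read as: the fraction $\widehat{\bm M}^j(a)$ of the population of $j$ receives the joint instruction ``role $r$ plays $a_r$ for all $r$.'' Its role-$r$ marginal, viewed as a distribution over pure actions and hence over degenerate mixed strategies, averages to exactly $\bar\sigma_r^j$. This is the step I expect to need the most care. Role-homogeneity is phrased as a single realization applying to all of $j$'s role-$i$ clients at once. The formula above therefore has to be re-derived under that interpretation: once we condition on a realization $a$, the clients of $j$ in different roles are perfectly correlated. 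The expected utility nevertheless remains the expectation over $a$ of a function multilinear in the roles. I will need to check that either the product structure or the independence of LLM draws across roles keeps the correlated form equal to the marginal formula. If it does not, I would instead use the appendix's reduced form, under which only per-role distributions matter, and take $\widehat{\bm M}^j$ with the prescribed marginals.

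Third, we verify the claims. Since the marginals are preserved, $U_{j'}(\widehat{\bm M}^j,\bm M^{-j})=U_{j'}(\bm M)$ for all $j'$. For the equilibrium property, take any deviation $\bm N^{j'}$ of another LLM $j'$. By the first step, $U_{j'}(\bm N^{j'},\widehat{\bm M}^j,\bm M^{-j,j'})=U_{j'}(\bm N^{j'},\bm M^{-j'})\le U_{j'}(\bm M)$, because payoffs under any profile depend on $j$ only through its marginals. For $j$ itself, any deviation yields the same value against $\bm M^{-j}$ as before, which is at most $U_j(\bm M)=U_j(\widehat{\bm M}^j,\bm M^{-j})$. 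Hence $(\widehat{\bm M}^j,\bm M^{-j})$ is an equilibrium with the same payoffs. The ``consequently'' part follows by applying the replacement to LLMs $1,\dots,k$ in turn; each step preserves both the equilibrium property and the payoff vector.
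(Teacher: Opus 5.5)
\textbf{Gap: cross-role correlation in mixed meta-actions.} Your Step 1 identity is correct for a single, non-randomized meta-action $(M^j_1,\dots,M^j_m)$. Within such a meta-action, the instructions reaching different roles of one instance are independent, so only the marginals $\bar\sigma^j_r$ matter. The problem is that the lemma lives in the mixed extension of the meta-game. A role-homogeneous $\widehat{\bm M}^j\in\Delta(\mathcal A)$ is itself a lottery over meta-actions whose outcome is shared by all of $j$'s clients. The equilibria the lemma must cover include such lotteries, e.g.\ the bounded-coordination profile, where an LLM draws one action and hands it to several of its roles.

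For a lottery $\Gamma^j$ one has $U_{j'}=\mathbb E_{M\sim\Gamma^j}\,U_{j'}(M,\bm M^{-j})$. Whenever $j$ guides two roles $r\neq r'$, the instances with $g_r=g_{r'}=j$ contribute products $\bar\sigma^{M}_r(a_r)\,\bar\sigma^{M}_{r'}(a_{r'})$. Their expectation over $M$ is in general not the product of the averaged marginals. So your claim that payoffs depend on $\bm M^j$ only through $(\bar\sigma^j_1,\dots,\bar\sigma^j_m)$ fails exactly for multi-role LLMs, which is the case the paper cares about. Your fallback (``only per-role distributions matter'') rests on the same error.

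Concretely, in the bounded-coordination equilibrium every role's marginal is uniform on $\{1,\dots,100\}$. Your $\prod_r\bar\sigma^j_r$ therefore makes the large LLM's five roles choose independent uniform actions. Its exact group of four then essentially never forms, and its payoff of $49.99$ collapses to nearly $0$. In particular, your map does not even fix meta-actions that are already role-homogeneous. As written, your proof handles only the case where $j$ does not randomize in a way that is correlated across its roles.

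\textbf{The repair.} Apply your Steps 1--2 realization by realization and then average: set $\widehat{\bm M}^j(a):=\mathbb E_{M\sim\Gamma^j}\prod_r\bar\sigma^{M}_r(a_r)$. For each fixed realization $M$, your argument shows that the product of its marginals is payoff-equivalent to $M$ against every profile of the others. Your worry about within-role correlation is harmless, since each instance contains exactly one client per role. Payoffs are linear in $j$'s lottery, and that lottery is independent of the other LLMs' randomization. Hence the mixture is payoff-equivalent to $\Gamma^j$ against every profile, and your Step 3 then goes through verbatim. This is the paper's construction $\widehat M^j(a)=\sum_{\bm\sigma}M^j(\bm\sigma)\prod_r\sigma_r(a_r)$: first draw the joint profile $\bm\sigma$, which keeps the cross-role correlation, and only then resolve each role's mixed strategy independently.
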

The intuition is that both client-level randomization and within-role splitting induce a distribution over deterministic actions. By linearity, replacing the original meta-action by this induced distribution over homogeneous actions preserves the distribution of realized play, and hence the payoffs.

\myparagraph{Single Role Guidance.}
We next show that shared instructions can change equilibrium behavior only when some LLM may influence more than one role in the same interaction. Call an LLM \(j\) \emph{single-role} if it guides clients in at most one role; that is, \(p_i^j>0\) for at most one role \(i\).

\begin{restatable}{lemma}{singleRole}
Suppose all LLMs are single-role. For each role \(i\), define the aggregate role-\(i\) mixed action by
$
M_i=\sum_{j\in L} p_i^j M_i^j $.
Then the aggregate outcomes of one-shot meta-game Nash equilibria are exactly all the mixed Nash equilibria of the base game. 
\end{restatable}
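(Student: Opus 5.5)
The plan is to reduce each LLM's one-shot problem to a single role's best-response problem in the base game. I would first interpret $M_i^j$, a distribution over $\Delta(A_i)$, as the induced mixed action $\bar M_i^j\in\Delta(A_i)$ obtained by averaging: $\bar M_i^j(a)=\sum_{\sigma_i}M_i^j(\sigma_i)\sigma_i(a)$. Then $M_i=\sum_j p_i^j\bar M_i^j$. Because $\sum_j p_i^j=1$ for each role, this is a genuine mixed action.

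\textbf{Reduction of the utility.} Fix a single-role LLM $j$ guiding role $i$, so $p_r^j=0$ for $r\neq i$. In $q_{i,j}$ every $g$ with $g_i=j$ has $g_r\neq j$ for $r\neq i$. Summing the product over the other coordinates $g_r$ and over $\sigma_{-i}$, and using that $u_i$ is multilinear in the mixed strategies, I would compute
\[
U_j(\bm M)=p_i^j\sum_{\sigma_i}M_i^j(\sigma_i)\,u_i(\sigma_i,M_{-i})=p_i^j\,u_i(\bar M_i^j,M_{-i}).
\]
The essential point is that the opponents' aggregate $M_{-i}$ does not depend on $\bm M^j$ at all, since $j$ has no mass in the other roles. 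This step is a routine but careful expansion of the formula for $q_{i,j}$. It is the one place where the single-role hypothesis is used, so it is the main thing to verify: the independence of roles across $g$ gives exactly a product of aggregate mixtures.

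\textbf{Equilibria map to equilibria.} Suppose $\bm M$ is a meta-game equilibrium, and take a role $i$ and a pure action $a$. If some $j$ with $p_i^j>0$ could gain by deviating to $a$, i.e.\ $u_i(a,M_{-i})>u_i(\bar M_i^j,M_{-i})$, then the meta-action that instructs all its clients to play $a$ raises $U_j$, because $M_{-i}$ is unchanged. This contradicts equilibrium. Hence each $\bar M_i^j$ with $p_i^j>0$ is a best response to $M_{-i}$. The best-response set is convex, so the convex combination $M_i$ is also a best response. Thus $(M_1,\dots,M_m)$ is a Nash equilibrium of $B$. LLMs with $p_i^j=0$ in every role contribute nothing and get utility $0$ regardless, so they can be ignored.

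\textbf{Equilibria are attained.} Conversely, let $\sigma^*$ be a mixed Nash equilibrium of $B$. Let every LLM $j$ active in role $i$ instruct all its clients to play $\sigma_i^*$. If LLMs are inactive in some role, that role's aggregate is still determined, because its $p_i^j$ sum to 1 over the active LLMs. Then the aggregate equals $\sigma^*$. Any deviation by $j$ yields $p_i^j u_i(\bar M_i^{\prime j},\sigma^*_{-i})\le p_i^j u_i(\sigma_i^*,\sigma^*_{-i})$, since $\sigma^*_{-i}$ is fixed. So this profile is an equilibrium whose aggregate is $\sigma^*$, which shows the two sets coincide.
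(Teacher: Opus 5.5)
Your proposal is correct and follows the paper's proof essentially step for step: use the single-role hypothesis to reduce $U_j$ to $p_i^j\,u_i(\bar M_i^j,M_{-i})$, show each active LLM's instruction is a best response so that the aggregate (a convex combination) is one too, and for the converse have every active LLM instruct $\sigma_i^*$. Your explicit averaging to $\bar M_i^j$, your restriction to pure deviations, and your handling of LLMs inactive in every role are slightly more careful than the paper's presentation, but they do not change the argument.
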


Thus, when LLMs are single-role, the meta-game does not introduce a new strategic structure. Genuinely new equilibria can arise only when some LLM guides clients in several roles and thus internalizes interactions among them.

\myparagraph{Prisoner's Dilemma.}
We begin with a Prisoner's Dilemma. There are two roles, and each has two possible actions: \(C\)(ooperate) and \(D\)(efect). The payoffs are
\(
u(C,C)=(X,X),
u(D,D)=(Y,Y),
u(C,D)=(Z,0),
u(D,C)=(0,Z),
\)
with
\(
0>X>Y>Z.
\)
Suppose LLM  \(1\) guides a \(0.9\) fraction of the clients in each role, while LLM \(2\) guides the remaining \(0.1\) fraction.
Take ${X={-2}}, {Y={-4}}, {Z={-5}}, {p=0.9}$.
Then the unique equilibrium is for LLM 1 to instruct all its clients to cooperate, and LLM 2 to instruct all its clients to defect.   
The intuition is simple. Since the large LLM guides almost all clients in both roles, it is almost always playing against itself. Because \(2X>Z\), coordinating its two roles on cooperation gives it a higher average payoff than using an asymmetric instruction such as \(CD\), whose two role payoffs are \(Z\) and \(0\). Thus the large LLM cooperates. The small LLM, however, mostly meets clients of the large one; it therefore defects and exploits the cooperative mass.
The large LLM's average utility is
\(
{\bar{U}^1={0.9X}+0.1Z={-2.3}},
\)
whereas the small gets 
\(
\bar{U}^2={0.1Y}={-0.4}.
\)
Thus, the small LLM obtains the higher average payoff.
This example reflects a general pattern in the Prisoner's Dilemma:
\begin{restatable}{claim}{pd}
Consider the Prisoner's Dilemma meta-game, and two LLMs, \(1\) and \(2\), with shares \(p\) and \({1-{p}}\) in each of the roles, respectively. For any ${0}{>}{X}{>}{Y}{>}{Z}$, for $p$ sufficiently large in any equilibrium \(\bar{U}^1<\bar{U}^2\).
\end{restatable}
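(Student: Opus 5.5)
The plan is a deviation argument. LLM~\(2\) can always switch to the constant meta-action ``everyone plays \(D\)''. Because \(D\) strictly dominates \(C\) for every role in the base game, this deviation beats the clients of LLM~\(1\) pointwise whenever the opponent comes from LLM~\(1\). The only place LLM~\(2\) can lose relative to LLM~\(1\) is in the \(O(1-p)\) mass of instances where it meets itself, which is negligible for \(p\) large. The argument should not depend on whether \(2X>Z\): when \(2X<Z\), LLM~\(1\) may prefer an asymmetric \(CD\)-type split in self-play, so the proof should not rely on identifying the equilibrium explicitly.

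\emph{Step 1 (decomposition).} By the first lemma of this section (role-homogeneity), I would take LLM~\(1\)'s meta-action \(\bm M^1\) to be a distribution over \(\{C,D\}^2\). This is only a convenience, since the argument is linear. Both roles have the same shares \(p,1-p\), so averages over clients are averages over the two roles with equal weight. For an LLM~\(1\) client, the opponent comes from LLM~\(1\) with probability \(p\) and from LLM~\(2\) with probability \(1-p\). I would write \(\bar U^1 = pS + (1-p)T\). Here \(S\) is the average payoff of an LLM~\(1\) client against an LLM~\(1\) opponent; it depends only on \(\bm M^1\), including the within-realization correlation between its own action and its opponent's. \(T\) is the average payoff against an LLM~\(2\) opponent. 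Since all payoffs are \(\le 0\), we get \(T\le 0\) and hence \(\bar U^1\le pS\).

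\emph{Step 2 (dominance gain).} Let \(g:=\min\{-X,\;Y-Z\}>0\). Then \(u_i(D,b)\ge u_i(a,b)+g\) for every own action \(a\) and opponent action \(b\); when \(a=D\) the gain is \(0\), so more precisely the bound is \(u_i(D,b)\ge u_i(a,b)\). To get a strict gain uniformly I would instead use the cruder bound \(u_i(D,b)\ge u_i(a,b)\) together with the fact that payoffs lie in \([Z,0]\). Let \(\widetilde U^2\) be LLM~\(2\)'s average payoff after deviating to ``always \(D\)'' against the fixed \(\bm M^1\). A client of LLM~\(2\) in role \(i\) who meets an LLM~\(1\) opponent gets \(\mathbb E[u_i(D,b)]\), where \(b\) is LLM~\(1\)'s role-\((-i)\) action. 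By pointwise dominance, applied to the joint distribution of \((a_i,b)\) in LLM~\(1\)'s self-play, this is at least the corresponding term of \(S\). A client who meets LLM~\(2\) gets \(Y\). Hence \(\widetilde U^2\ge pS+(1-p)Y\).

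\emph{Step 3 (strict comparison).} This step is the main obstacle, because the bounds so far give only \(\bar U^2\ge\widetilde U^2\ge \bar U^1+(1-p)Y\), which is not strict. To get strictness I would use the first inequality more carefully rather than bounding \(T\) by \(0\). In \(\bar U^1\), the cross term \(T\) is evaluated against LLM~\(2\)'s equilibrium play, and LLM~\(1\)'s self-play term \(S\) is at most \(\max\{X,\,Z/2\}\), the best symmetric average in self-play. Meanwhile \(\widetilde U^2\)'s cross term equals the payoff of \(D\) against LLM~\(1\)'s marginal, which is at least \(S\) plus \(g\) times the probability that LLM~\(1\) clients play \(C\).

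I would split into two cases according to whether LLM~\(1\) cooperates with probability bounded away from \(0\).
\begin{itemize}
\item If it does, the dominance gain contributes at least \(p\,g\,c\) for some constant \(c>0\), and this beats the loss \((1-p)|Y|\) for \(p\) near \(1\).
\item If it does not, LLM~\(1\)'s self-play is close to \(DD\) and \(S\approx Y\). In that case I would show directly that LLM~\(1\) has a profitable deviation for large \(p\), namely to \(CC\) or to \(CD\)/\(DC\), whichever is better, using \(\max\{2X,Z\}>2Y\). This rules the case out as an equilibrium.
\end{itemize}
Combining the two cases gives \(\bar U^2\ge\widetilde U^2>\bar U^1\) for all \(p\) above an explicit threshold depending on \(X,Y,Z\). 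The delicate point is choosing the cutoff \(c\) so that both case bounds hold uniformly over all equilibria.
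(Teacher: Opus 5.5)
Your proof is correct, and it takes a genuinely different route from the paper's. The paper characterizes the equilibrium. For large $p$, $DD$ is never a best response for LLM~$1$, since $CC$ guarantees $pX+(1-p)Z$ while $DD$ earns at most $pY$. LLM~$2$'s unique best response is then $DD$. Finally, $\bar U^1,\bar U^2$ are computed explicitly for $CC$ and for $CD/DC$ against $DD$, and mixtures are handled by linearity.

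You never determine what LLM~$2$ plays. Instead you use revealed preference, $\bar U^2\ge\widetilde U^2$, together with a pointwise dominance coupling between the all-$D$ deviation and LLM~$1$'s self-play. This gives $\widetilde U^2-\bar U^1\ge pg\pi_C+(1-p)Y$, where $\pi_C$ is the average cooperation rate of LLM~$1$'s clients. The only equilibrium information you then need is that $\pi_C$ is bounded away from $0$.

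Your route buys robustness, because it is a comparison rather than a characterization. It does not depend on the structure of LLM~$2$'s best response. The paper's one-line dominance justification of that step overlooks that switching a role to $D$ also hurts LLM~$2$'s own clients in the self-play instances; this is harmless only because $p$ is large. Your argument also carries over to other games with a strictly dominant action and nonpositive payoffs. The paper's route, in exchange, yields the explicit equilibrium structure and exact payoff gaps.

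Your case split with a tuned cutoff $c$ is more work than needed. The paper's first step already shows that $DD$ lies outside the support of LLM~$1$'s role-homogeneous equilibrium strategy for large $p$. So every profile in that support contains a $C$, which gives $\pi_C\ge 1/2$. Your Case~1 bound with $c=1/2$ then finishes the proof.
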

\myparagraph{Coordination.}
The Prisoner's Dilemma shows that a large share can be a disadvantage. In coordination games, the opposite holds. 
Consider a three-role coordination game. Each role chooses an action in \(\{0,1\}\). A total payoff of \(100\) is split equally among the roles that choose the majority action. 
Suppose there are two LLMs, 1 and 2, guiding fractions \(0.9\) and \(0.1\) of each role, respectively. Consider the mixed profile in which each provider randomizes uniformly between the two coordinated instructions \((0,0,0)\) and \((1,1,1)\). When the two LLMs choose the same coordinated instruction, all roles are rewarded. When they choose different instructions, the large LLM's clients form the majority with high probability, while the small LLM's clients are typically in the minority. Thus the large LLM's coordinated mass is rewarded.
\begin{restatable}{claim}{coordination}
Consider the majority-coordination game described above, and two LLMs, \(1\) and \(2\), with shares \(p\) and \(1{-}p\) in each of the roles, respectively. If \(p{>}1{-}p\), then in every equilibrium \(\bar{U}^1{>}\bar{U}^2\).
\end{restatable}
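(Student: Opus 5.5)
Since $\bar U^1 = U^1/(3p)$ and $\bar U^2 = U^2/(3(1-p))$, the claim reduces to showing that in every equilibrium LLM~1 captures strictly more than a $p$-fraction of the total welfare. The plan rests on three facts: this game is \emph{constant-sum} at the level of instances, LLM~1 can guarantee at least its proportional share by imitating LLM~2, and strictness comes from a profitable deviation in the tie case.

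\textbf{Step 1 (constant sum and reduction).} With three roles and binary actions there is always a strict majority. Every instance therefore distributes exactly $100$ among the clients, so $U^1(\bm M)+U^2(\bm M)=100$ for every meta-action profile. By Lemma~\ref{roleHomo} (restated as \texttt{roleHomo}) we may assume both LLMs use role-homogeneous meta-actions $\bm M^1,\bm M^2\in\Delta(\{0,1\}^3)$. The target inequality $\bar U^1>\bar U^2$ is then equivalent to $U^1>100p$.

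\textbf{Step 2 (imitation guarantee).} Fix an equilibrium and consider the deviation $\widehat{\bm M}^1:=\bm M^2$, in which LLM~1 draws its instruction profile independently from the same distribution as LLM~2. I will show $U^1(\bm M^2,\bm M^2)\ge 100p$.

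Condition on the realized profiles $c$ (LLM~1) and $b$ (LLM~2). Since these are i.i.d., the pairs $(c,b)$ and $(b,c)$ are equally likely, so it suffices to show that the sum of LLM~1's payoff over the two realizations is at least $2\cdot 100p$.

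In roles where $c_i=b_i$, the action does not depend on who guides the client. In roles where they disagree, the LLM~1 instruction is realized with probability $p$ regardless of which of $c,b$ it is. Swapping $c$ and $b$ therefore leaves the distribution of \emph{played} action profiles unchanged. It only relabels which clients are LLM~1's and which are LLM~2's in the disagreeing roles.

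I will then compute, per disagreement pattern (one, two, or three disagreeing roles), how the fixed total of $100$ splits between the two labels. I expect to verify that the $p$-heavy label receives at least a $p$-share of the symmetrized total. The argument is that a client's probability of being in the majority is increasing in the mass of clients sharing its instruction, and LLM~1 has the larger mass when $p>1-p$.

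This case computation is the main obstacle. The split among majority members is non-monotone: $50$ each in a $2$-majority versus $100/3$ in a $3$-majority. So I must check each pattern explicitly rather than rely on a monotonicity heuristic.

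\textbf{Step 3 (equilibrium bound and strictness).} Since $\bm M$ is an equilibrium, $U^1(\bm M)\ge U^1(\bm M^2,\bm M^2)\ge 100p$. It remains to rule out equality.

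If $U^1(\bm M)=100p$, then every inequality above is tight. From the case analysis in Step~2, I expect tightness to force LLM~2's distribution to be supported on profiles whose imitation yields no disagreement advantage, essentially the coordinated profiles $(0,0,0)$ and $(1,1,1)$ with LLM~1 matching.

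In that case I exhibit a strictly profitable deviation for LLM~1. Let $a$ be a coordinated profile that LLM~2 plays with probability at most $1/2$, and have LLM~1 play the opposite coordinated profile $\bar a$. When LLM~2 plays $a$, LLM~1's clients form the majority whenever at least two of the three clients are LLM~1's, which has probability greater than $1/2$ since $p>1/2$. In the $2$-majority case each of them receives $50>100/3$. A short computation should show that this raises LLM~1 strictly above $100p$, which contradicts equilibrium.

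Hence $U^1>100p$, that is, $\bar U^1>\bar U^2$.
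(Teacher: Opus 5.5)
\textbf{Gap in Step 2.} The inequality $U^1(\bm M^2,\bm M^2)\ge 100p$ is false for general $\bm M^2$, and the proof breaks here. Take $c=(0,1,0)$ for LLM~1 and $b=(1,0,0)$ for LLM~2, and write $q=1-p$. The expected payoffs of LLM~1's clients are:
\begin{itemize}
\item role~1: $50p+\tfrac{100}{3}q$;
\item role~2: only $50q$, because whenever they meet LLM~1's own role-1 client they are the lone minority;
\item role~3: $50-\tfrac{200}{3}pq$.
\end{itemize}
So LLM~1's total is $p\,(100-\tfrac{100}{3}q(2p-1))$. The swapped pair $(b,c)$ is the mirror image of $(c,b)$ under exchanging roles 1 and 2, so it gives LLM~1 the same amount. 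Hence the symmetrized sum is strictly below $200p$ whenever $p>1/2$. If $\bm M^2$ is uniform on $\{(0,1,0),(1,0,0)\}$, you get $U^1(\bm M^2,\bm M^2)=100p-\tfrac{50}{3}pq(2p-1)<100p$.

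Your heuristic fails because a non-coordinated instruction profile sets an LLM's own clients against each other. The large LLM meets itself far more often than the small one, so copying a split profile penalizes the larger LLM.

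Two auxiliary claims are also wrong. First, swapping $c$ and $b$ does change the distribution of played profiles: in a disagreeing role, $c_i$ is played with probability $p$ before the swap and with probability $q$ after. Second, the Step~3 deviation is oriented the wrong way. If LLM~2 plays $\bar a$ with probability one, your $a$ has probability $0$, and playing $\bar a$ yields exactly $100p$, so there is no strict gain. With Step~2 gone, the tightness analysis in Step~3 has no basis.

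\textbf{Missing idea.} Bound LLM~1's equilibrium payoff from below by its maxmin value rather than by the payoff of imitation. That is, exhibit one strategy of LLM~1 that earns strictly more than $100p$ against \emph{every} strategy of LLM~2. The paper uses the uniform mixture of $(0,0,0)$ and $(1,1,1)$:
\begin{itemize}
\item against a coordinated LLM~2 profile, LLM~1's average is $\tfrac{100}{3}+\tfrac{50}{3}q(2p-1)$;
\item against a non-coordinated profile the average is larger (a direct computation gives an excess of $\tfrac{200}{9}q^2$);
\item mixed strategies of LLM~2 are handled by linearity.
\end{itemize}
Since LLM~1 best-responds in equilibrium, $\bar U^1>\tfrac{100}{3}$. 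Your constant-sum identity from Step~1 then gives $\bar U^2<\tfrac{100}{3}$. Strictness comes directly from this guarantee, so no separate tie-case argument is needed.
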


\myparagraph{Bounded Coordination.}
The previous two examples point in opposite directions: in the Prisoner's Dilemma, larger LLMs do worse; in majority coordination, larger LLMs do better.
The following game shows that market share and payoff need not be monotone: an intermediate-size LLM can obtain the highest average payoff.
There are ten symmetric roles and a common action set
$
A_i=\{1,\ldots,100\}.
$
Given an action taken by each of the roles, let \(W\coloneq \{r: |\{r' : a_{r'}=a_r\}|=4\}\) be the set of roles whose action was chosen by exactly four roles.
If \(W\neq\emptyset\), a total prize of \(100\) is divided equally among the roles in \(W\).
Thus, one exact group of four receives the whole prize, while two exact groups of four split the same fixed prize.
 If \(W=\emptyset\), all roles receive \(0\). 
There are three LLMs. The large LLM guides roles \(1,\ldots,5\), the medium guides roles \(6,\ldots,9\), and the small guides role \(10\). Hence their market shares are $0.5, 0.4,$ and $0.1$.
Consider the following mixed role-homogeneous profile. The large LLM draws \(x\) uniformly from \(\{1,\ldots,100\}\), instructs roles \(1,\ldots,4\) to play \(x\), and instructs role \(5\) to play the next action cyclically. The medium LLM draws \(y\) uniformly and instructs all four of its roles to play \(y\). The small LLM draws \(z\) uniformly and instructs its single role to play~\(z\).

\begin{restatable}{claim}{boundedcoordination}
The profile above is a Nash equilibrium. The average utilities of clients guided by the large, medium, and small LLMs are respectively
\(
9.998, 12.25,
\)
and $0$.
Thus, the medium LLM obtains the highest average utility, even though it does not have the largest market share.
\end{restatable}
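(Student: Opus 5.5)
The plan is to verify the claim by direct computation, in two parts: first the three stated payoffs, then the absence of profitable unilateral deviations.

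\textbf{Payoffs.} Each LLM guides whole roles (large: roles $1,\ldots,5$; medium: $6,\ldots,9$; small: $10$). So in every instance the realized profile is determined by the independent uniform draws $x,y,z$. I first identify which roles can ever lie in $W$ under the profile. The value $x$ is chosen by exactly four roles iff $y\neq x$ and $z\neq x$. The value $x+1$ is chosen by $1$, $2$, $5$ or $6$ roles, never four. The value $y$ is chosen by exactly four roles iff $y\notin\{x,x+1\}$ and $z\neq y$. The value $z$ alone is never chosen by four roles. Hence at most the large group and the medium group can win, and the prize is $100$ to a sole winner or $50$ to each of two winners.

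With probability $0.98\cdot 0.98=0.9604$ both groups win. The large group wins alone with probability $0.01\cdot0.99+0.98\cdot0.01=0.0197$. The medium group wins alone with probability $0.98\cdot 0.01=0.0098$. This gives total payoffs $49.99$ and $49$ for the large and medium LLMs. Dividing by $5$ and $4$ clients gives $9.998$ and $12.25$. The small LLM always gets $0$.

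\textbf{Equilibrium.} By linearity of $U_j$ in $\bm M^j$ and the role-homogeneity lemma, it suffices to rule out profitable \emph{pure} deviations: a fixed action for each of the deviator's roles, against the others' mixed strategies. Because the opponents draw uniformly and independently (and the large LLM's offset is cyclic), payoffs are invariant under cyclic relabeling of actions. So a pure deviation matters only through its pattern, i.e.\ the partition of the deviator's roles into equal-action blocks, together with which of those blocks coincide with each other relative to the cyclic offset. The three LLMs are handled as follows.

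\emph{Small LLM.} Its single role can be in a group of exactly four only if three other roles share its action. The other two LLMs always produce blocks of sizes $4,4,1$ or merged blocks of sizes at least $5$, so this never happens. Its payoff is identically $0$, and it has no profitable deviation.

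\emph{Medium LLM.} I enumerate the partitions of its four roles: $4$; $3{+}1$; $2{+}2$; $2{+}1{+}1$; $1{+}1{+}1{+}1$. A block of size $b<4$ can win only if it lands exactly on a value completing a group of four. The only value chosen by exactly $4-b$ other roles is $z$ (when $b=3$); the large LLM's values $x$ and $x+1$ are chosen by $4$ and $1$ roles, which would give $b+4\ge 5$ or $b+1<4$. The event $z$ equals a given value has probability $1/100$, and even then the block typically shares the prize with the large group. I will bound each non-$4$ pattern by something like $100\cdot(1/100)\cdot$const, well below $49$. The pattern $4$ gives exactly $49$ for any $y$, by invariance.

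\emph{Large LLM.} This is the main obstacle: five roles, more patterns, and the fifth role actually matters because it blocks the medium group. I will show that any winning configuration needs some block of exactly four placed on a value the others avoid. Sizes $5$, $3{+}2$, $3{+}1{+}1$, etc.\ can win only through coincidences with $z$ or $y$, each of probability at most about $1/100$. For the pattern $4{+}1$, the leftover role reduces the medium group's winning probability only when it hits $y$. This has probability $1/100$ for any placement distinct from the block value, and placing it on the block value yields a count of $5$, which loses. So every $4{+}1$ placement yields $49.99$. Finally I check that the coincidence-based patterns stay below $49.99$ by crude bounds: prize at most $100$ times a probability of order $1/100$.
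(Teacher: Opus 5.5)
Your proposal is correct and takes essentially the paper's route. You compute the three payoffs by conditioning on coincidences among $x,y,z$; your probabilities $0.9604$, $0.0197$, $0.0098$ and totals $49.99$, $49$ agree with the paper's counts $98\cdot 98$, $99\cdot 99$, $98\cdot 99$. You then rule out pure deviations LLM by LLM, and your pattern reduction via cyclic invariance is a more systematic version of the paper's verbal deviation check.

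One sub-claim in the medium case is wrong and needs fixing. For $b=3$ you have $b+1=4$, not $b+1<4$, so a medium $3$-block is also completed by the large singleton at $x+1$. Likewise, a medium $2$-block is completed when the singleton and $z$ both land on it. This is exactly the paper's ``accidental completion by the large singleton or the small role.'' These events have probabilities $0.01\cdot 0.99+0.98\cdot 0.01=0.0197$ and $10^{-4}$, respectively. Your crude bound therefore still goes through; for example, a $3{+}1$ deviation earns at most $75\cdot 0.0197<2<49$.

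Similarly, your opening sentence in the large case is overstated, since a large $3$-block wins when $z$ hits it. Your next sentence already accounts for this case.
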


The intuition is that the payoff opportunity is bounded. The medium LLM controls exactly four roles, so by coordinating them, it creates a winning group whenever no other role chooses the same action. The large LLM controls five roles: it can create one exact group of four, but the fifth role cannot be included without destroying exactness. As a result, the large LLM obtains slightly more total payoff than the medium LLM, but this payoff is averaged over five roles rather than four. The small LLM is too small to form a winning group. Thus the medium LLM is best positioned: it is large enough to form an exact winning group, but not so large that its payoff is diluted over extra roles that cannot join that group.


\myparagraph{Conclusion.} Together, these examples show that the one-shot meta-game has no universal comparative static for market share: the effect of size depends on the base game.


\section{The Repeated Setting}\label{sec:repeated}
In this section, we turn to the repeated setting. Our main result is a folk-theorem-type result: any realizable utility vector that gives each LLM at least its worst-case payoff can be approximately sustained - as closely as required - as an $\epsilon$-equilibrium of the repeated meta-game.

\begin{definition}[Feasibility]
The feasible payoff set is
\[
F:=
\operatorname{co}
\left\{
\bigl(U_j(\bm M)\bigr)_{j\in L}:\bm M\in\bm{\mathcal M}
\right\},
\]
the convex hull of the payoff vectors generated by meta-action profiles (here ${\mathcal M}$ is the set of all possible meta-actions).
A payoff vector $r=(r_j)_{j\in L}$ is feasible if $r\in F$.
\end{definition}

Intuitively, the feasible payoff set consists of all payoff outcomes that the LLMs can collectively achieve by choosing meta-actions, including any mixtures of those outcomes.

\begin{definition}[Minmax payoff]
For each $j\in L$, its minmax payoff is $ \displaystyle \mathrm{IR}_j
:=
\min_{\bm M^{-j}\in\mathcal M^{k-1}}
\max_{M^j\in\mathcal M}
U_j(M^j,\bm M^{-j})$.
\end{definition}

That is, the minmax payoff of LLM $j$ is the minimum it can guarantee to itself, in a single time step.

\begin{definition}[Individual rationality]
A feasible vector $r$ is individually rational if
$r_j\ge \mathrm{IR}_j$ for every $j\in L$, and is strictly individually
rational if $r_j>\mathrm{IR}_j$ for every $j\in L$.
\end{definition}


\begin{restatable}[Folk Theorem for LLMs]{theorem}{theFolkTheorem}
\label{thm:discounted-folk-lean}
For any feasible and strictly individually rational utility vector ${\bm{r}}=(r_j)_{j\in L}$ and $\epsilon,\gamma>0$,
there exists a strategy profile $\bm{S}$ such that for any sufficiently large discount factor $\delta$:
\begin{enumerate}
    \item $\bm{S}$ is an $\epsilon$-equilibrium.
    \item For every $j\in L$, $|U_j({\bm{S}})-r_j| \le \gamma$.
\end{enumerate}
\end{restatable}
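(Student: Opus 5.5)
The plan is the classical Fudenberg--Maskin construction: a cooperative phase that approximates $\bm r$, and a punishment phase triggered by detected deviations. The twist is that detection runs through the public signal the LLMs actually observe, namely the empirical distribution of realized action profiles in their clients' instances. Since the population is a continuum, every LLM observes this distribution deterministically, by a law of large numbers. The only mixing is at the level of meta-actions. Role-level splitting already lets a single meta-action profile realize convex combinations, so I will not need public randomization.

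\textbf{Target path.} First I approximate $\bm r$ by a rational convex combination $\sum_\ell \lambda_\ell U(\bm M_\ell)$ with $|\sum_\ell\lambda_\ell U_j(\bm M_\ell)-r_j|<\gamma/2$. Shrinking toward a strictly IR interior point keeps every component strictly above $\mathrm{IR}_j$ by some margin $\eta>0$. I then fix a deterministic periodic sequence of pure meta-action profiles whose frequencies match the $\lambda_\ell$. As $\delta\to1$, the discounted payoff of every continuation of this sequence tends to the target, so each continuation stays within $\gamma$ of $\bm r$ and above $\mathrm{IR}_j+\eta/2$.

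\textbf{Monitoring.} This is the step I expect to be the main obstacle. The signal is the realized distribution over $\mathcal{A}$, or over mixed profiles, within the instances containing an LLM's clients. That distribution is a polynomial in all the $M_i^{g}$. A deviation by LLM $j$ need not be attributable to $j$, and it need not even be visible to the other LLMs, since it can alter only instances that contain $j$'s clients. I intend to sidestep identifiability and prove an $\epsilon$-equilibrium rather than a perfect one. If a deviation changes no LLM's observation, then it changes neither the instruction distributions of the others nor, given fixed beliefs, the realized outcome distribution. Its one-shot gain must come from reshuffling $j$'s own clients. Here I would use that the sequence $\bm M_\ell$ can be chosen with each LLM best-responding \emph{within its observational equivalence class}, and that any undetectable gain is bounded by a continuity modulus. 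Otherwise the signal is perturbed, and some LLM sees it. I will have LLMs coordinate through their clients' actions: each LLM's instructions encode what it observed, so that, after a one-period lag, deviation information becomes common knowledge through subsequent play. For this encoding I will reserve a short ``report'' period after each block, during which clients are instructed to play a pattern that reveals the observation. This report phase is where I expect the real difficulty. I must make sure that lying in reports is either harmless or itself detectable, and the minmax structure should help here.

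\textbf{Punishment and incentives.} When a deviation is detected, every LLM except a deviator, or all suspected deviators in turn if attribution is ambiguous, plays the minmaxing profile $\bm M^{-j}$ from the definition of $\mathrm{IR}_j$ for $T$ periods and then returns to the target path. A one-shot deviation gains at most $2\bar u(1-\delta)$, where $\bar u=\max|u_i|$. It loses roughly $\delta(1-\delta^T)(\eta/2)$. Choosing $T$ and then $\delta$ large makes deviations unprofitable up to $\epsilon$. When attribution is ambiguous and an innocent LLM is punished, this happens only off the equilibrium path, which is acceptable for Nash. The residual slack from undetectable deviations and report manipulation is absorbed into $\epsilon$. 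Finally, the same argument transfers to large finite populations, where concentration bounds replace the exact law of large numbers at an extra cost of $o(1)$ in $\epsilon$.
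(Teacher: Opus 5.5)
\textbf{There is a genuine gap: the attribution problem is not solved.} Your rule of minmaxing all suspected deviators in turn does not deter deviations in general.

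In typical share configurations, every LLM guides a positive mass in the role where the discrepancy appears. Then every LLM is a suspect for every deviation, so the rotation is untargeted. The true deviator $h$ spends most of the rotation not being punished but helping to minmax innocent suspects, and nothing bounds $h$'s payoff in those segments by $r_h$. In the paper's Prisoner's Dilemma with a large LLM $1$, for example, minmaxing LLM $1$ means the small LLM $2$ defects while LLM $1$ best-responds by cooperating. LLM $2$ then earns per client close to $0>X$.

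Your loss estimate $\delta(1-\delta^T)\eta/2$ counts only $h$'s own punishment segment. Once the other segments are included, $h$'s average over the rotation can exceed $r_h$, so deviating, and thereby framing others, can be profitable. Saying that punishing innocents happens only off path addresses the innocents' incentives, which indeed do not matter for Nash. It does not address the deviator's incentives, which do.

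The rest of your monitoring section is also unfinished. The report phase is left open. The claim that target profiles can be chosen so that each LLM best-responds within its observational equivalence class is unjustified, and would in general restrict which $r$ can be implemented.

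In the paper's model both devices are unnecessary. Each LLM's observation is taken to reveal the aggregate masses $\mathrm{tot}_{\bm M}(i,a)=\sum_q p_i^q M_i^q(a)$. Since $U_j$ depends on a meta-action profile only through the masses $M_i^q(a)$, every payoff-relevant unilateral change is visible at once. The only open question is who made it.

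\textbf{The missing idea is the paper's targeted inspection scheme.} Play is organized in phases, with one LLM $l$ in focus at a time. The other LLMs follow the implementation cycle deterministically. LLM $l$ privately probes with probability $p$ per period by playing a uniformly random extreme meta-action $M^{\bm a}$. At the end of each block of length $T$, two public tests are run:
\begin{itemize}
\item \emph{Excess deviation.} If some observed mass exceeds $\Gamma_{h,l}(i,a)=\sum_{q\neq l}p_i^q\overline M_i^{h,q}(a)+p_i^l$, the most $l$ could produce alone, then $l$ is cleared and the phase advances.
\item \emph{Frequency deviation.} Otherwise, if the fraction of discrepant periods exceeds $\tau=3p$, then $l$ alone is minmaxed for $K=\lceil cT\rceil$ periods. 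Here $c$ is calibrated so that $(U_l^{\max}+c\,\mathrm{IR}_l)/(1+c)\le\hat r_l$.
\end{itemize}

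These tests handle each kind of deviator as follows.
\begin{itemize}
\item A deviator outside focus must deviate on more than $pT$ periods of a block to gain. Then, with probability bounded below, one of its deviations coincides with a probe that pushes some $(i,a)$ mass above $\Gamma_{h,l}$, so such phases end quickly.
\item The deviator in focus can never be cleared, because the excess test cannot fire in its own phase. It either keeps its discrepancy fraction at most $\tau$, gaining only about $\Delta\tau$ per period, or it is punished.
\item Innocent LLMs are punished only by false frequency alarms under honest play. These have probability at most $e^{-2p^2T}$ per block and are absorbed into $\epsilon$.
\end{itemize}

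This clearing mechanism attributes punishment correctly without any report phase, and your proposal needs it or an equivalent.
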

\paragraph{The Attribution Problem.} The main challenge in proving the theorem is that while the strategic agents are the LLMs, deviations are observed only through the actions taken by clients in the underlying game instances.  These, in turn, are observed only as aggregate realized behavior, not which provider guided which client in a particular instance. Thus,  the difficulty is \emph{attributing} the deviation to any specific LLM. Our proof uses a randomized probing mechanism, somewhat similar to the construction in \cite{fudenberg1994folk} for imperfect public monitoring.  However, unlike \cite{fudenberg1994folk} which assume a finite action space, our proof works for the infinite action space generated by our model. 

\myparagraph{Example.} To illustrate, consider a stylized heist with three roles: a planner, a burglar, and a driver. After the heist, each role is questioned by investigators and names one of the other two roles as responsible for the heist. If every role is named by exactly one other role, the evidence is inconclusive and all three are released, yielding payoff \(0\) to all. If two roles name the third role, the third role is convicted and fined \(\$2.1\mathrm{K}\), yielding payoff \(-2.1\), while the other two receive a \(\$1\mathrm{K}\) leniency benefit for helping identify the culprit, yielding payoff \(1\).
There are three LLMs. Each LLM has a primary constituency: one mostly guides planners, one mostly guides burglars, and one mostly guides drivers. Specifically, each LLM guides \(80\%\) of the clients in its primary role and \(10\%\) of the clients in each of the other two roles.

We first show that the payoff vector $(0,0,0)$ is feasible and strictly individually rational.
Consider the \emph{blame cycle} in which the planner names the burglar, the burglar names the driver, and the driver names the planner. This gives payoff \(0\) to every role, and hence payoff vector \((0,0,0)\) to the LLMs. Thus, this payoff vector is feasible.
To see that it is also strictly individually rational, consider the LLM whose primary constituency is planners. To punish it, the other two LLMs instruct all burglar and driver clients they guide to name the planner. Thus, a planner client of the punished LLM is named by both the burglar and the driver with probability \(0.8^2=0.64\), and then receives payoff \(-2.1\). In all remaining cases, its payoff is at most \(1\). The punished LLM's burglar and driver clients also receive payoff at most \(1\). Therefore, its payoff is bounded by
$
0.8\left(0.64\cdot-2.1+0.36\cdot 1\right)+0.1\cdot 1+0.1\cdot 1
=
-0.5872
<0.
$
By symmetry, the same argument applies to each LLM. Hence \((0,0,0)\) is strictly individually rational.

Payoff vector \((0,0,0)\), however, is not an equilibrium in the one-shot meta-game: an LLM can profitably deviate by changing the instructions to some of its clients so as to receive payoff \(1\). 
In contrast, in the repeated setting, Theorem~\ref{thm:discounted-folk-lean} implies that \((0,0,0)\) can be (approximately) sustained. The obstacle is attribution. To illustrate this difficulty, suppose the prescribed behavior is the blame cycle above, but the observed population frequencies show that \(1\%\) too many burglars named the planner rather than the driver. The LLMs can infer that some deviation occurred, but this observation does not identify the deviator: the LLM that mostly guides burglars is only one possibility, since the other two LLMs also guide positive masses of burglars. Hence the LLMs know that a deviation occurred, but not which LLM should be punished. The proof of the theorem constructs strategies that solve this attribution problem, for any game, and any client-share distribution.

\myparagraph{Proof Idea.}
The construction below is designed to solve the attribution problem. 
Fix a feasible individually rational payoff vector \(r\) and an error tolerance \(\epsilon,\gamma>0\). Since \(r\) is feasible, it can be approximated by a finite cycle 
\(
\bar{\bm M}=(\bm M_1,\ldots,\bm M_T).
\)
of meta-action profiles, which we coin the \emph{implementation cycle}.
If all LLMs follow the construction, they essentially repeatedly cycle through \(\bar{\bm M}\), and the resulting average payoff is close to \(r\). The remaining task is to make this behavior approximately incentive-compatible.

The repeated-game strategy is organized into phases. There is one phase for each LLM, and the phases are executed in the order \(1,2,\ldots,k\). In phase \(j\), LLM \(j\) is \emph{in focus}. Each phase is divided into \emph{blocks} of length \(\ell\), where \(\ell\) is chosen large enough for the relevant concentration bounds to hold. During a block, the LLMs usually follow the implementation cycle. However, with a small probability \(q\) in each period, the LLM in focus performs a prescribed \emph{test move}: for each role \(i\), it instructs \emph{all} of its role-\(i\) clients to play some identical action \(a_i\), chosen so that this produces an observable change relative to the implementation cycle. The timing of these tests is privately randomized by the LLM in focus.  So, the other LLMs cannot foresee when a test will occur.

The known population shares \(p_i^j\) make these tests informative. The \(p_i^j\)'s allow the LLMs to know the maximum change in the observed action frequencies LLM \(j\) could generate by itself in each role. Hence, if in a period the observed deviation in action frequencies is larger than what a test move by LLM \(j\) could explain, then the deviation cannot have been caused by \(j\) alone. We call such an event an \emph{excess deviation}. When such deviation is observed in phase \(j\), LLM \(j\) is cleared, the phase ends, and we move to the next phase.

This clearing rule handles deviations by LLMs other than the one in focus. Suppose the true deviator is some \(h\neq j\). To obtain a non-negligible gain, \(h\) must deviate sufficiently often. Since \(j\)'s tests occur independently with probability \(q\), with high probability at least one of \(h\)'s deviations coincides with a test by \(j\). In that period, the observed change in action frequencies contains both the prescribed test effect of \(j\) and the deviation of \(h\), and therefore exceeds what \(j\) could have generated alone. An excess deviation is then observed, so \(j\) is cleared and the construction proceeds to the next phase. In this way, a deviator who is not currently in focus cannot profit significantly before the review process moves on.

It remains to control the case in which the true deviator is the LLM currently in focus. In a block of length \(\ell\), let \(d\) be the number of periods in which a deviation from the implementation cycle was observed. Under the prescribed play, \(d\) is concentrated around \(q\ell\). Fix a tolerance band around \(q\ell\). If \(d\) falls outside this band, we call the event a \emph{frequency deviation}; it is attributed to LLM \(j\), and the other LLMs switch for a prescribed number of periods to a \emph{punishment phase} (using standard constructions). 
If, on the other hand, \(d\) remains inside the band, then even a deviating LLM \(j\) has only limited room to affect the realized play in that block. The block remains close to the implementation cycle, and the payoff effect of such deviations can be made smaller than the desired error tolerance by choosing \(q\), the tolerance band, and \(\ell\) appropriately. Note that a true deviator will never be cleared once its own review phase is reached.
Once a deviation is attributed to LLM \(j\), the other LLMs play a meta-action profile that approximately minmaxes \(j\), while \(j\) best responds. If \(r\) is only weakly individually rational, we first perturb it slightly toward a strictly individually rational feasible vector. This changes payoffs by less than the prescribed error and gives the slack needed to absorb the small losses created by tests, approximation, and rare punishments. 

Next, we describe the detailed equilibrium strategy. The validity proof is deferred to the appendix.

\myparagraph{Construction.}
%
Fix $\xi>0$ be a number such that $
12\xi<\epsilon, \quad 5\xi<\gamma$.
Choose one feasible strictly individually rational vector $s$. Since the feasible set is convex, for some sufficiently small $\lambda>0$ the vector $\hat r:=(1-\lambda)r+\lambda s$
is feasible, strictly individually rational, and satisfies $
\|\hat r-r\|_\infty\le \xi$.

By feasibility, choose finitely many meta-action profiles $\bm{\overline M}^{1},\ldots,\bm{\overline M}^{n}$,
and weights $(w_h)_{h=1}^n$ such that $\sum_{h=1}^n w_h U_j(\bm{\overline M}^{h})=\hat r_j,$ and $\sum_{h=1}^n w_h=1$, 
for all $j\in L$.

Throughout the proof, when a meta-action component $M_i^j$ is evaluated at a pure action $a$, this means the induced action mass
$M_i^j(a):=\mathbb E_{S_i\sim M_i^j}[S_i(a)]$.
For each $\bm a\in\mathcal A$, let $M^{\bm a}$ be the extreme meta-action defined by $M_i^{\bm a}(a')=\mathbf 1_{\{a_i=a'\}}$.
For any meta-action profile $\bm M$, let $\mathrm{tot}_{\bm M}(i,a)=\sum_{q=1}^k p_i^q\,M_i^q(a)$ denote the public aggregate mass of role-$i$ clients who play action $a$ .\footnote{Strictly speaking, an LLM observes only instances involving its own clients. However, with a continuum of games each period and clients uniformly randomly matched across them, these observations reveal the aggregate action frequencies almost surely.}
Lastly, let
\begin{align*}
    U_j^{\max}:=\max_{\bm M\in\bm{\mathcal M}}U_j(\bm M), \quad U^\star:=\max_{j\in L}\max_{\bm M\in\bm{\mathcal M}}|U_j(\bm M)|\\
    \text{and } \quad \Delta:=\max_{j\in L}\max_{\bm M,\bm M'\in\bm{\mathcal M}}|U_j(\bm M)-U_j(\bm M')|
\end{align*}
If $\Delta=0$, then every profile gives the same payoff vector, and the theorem is immediate. Hence assume $\Delta>0$. 
Because $\hat r$ is strictly individually rational and $L$ is finite, there exists $c>0$ such that $\displaystyle \frac{U_j^{\max}+c\,\mathrm{IR}_j}{1+c}\le \hat r_j$ for any $j\in L$.

Choose a number $p>0$ small enough that, writing
$\tau:=3p$,
we have
$\tau<1$, and $\Delta\tau\le \xi$.
For each integer $T\ge 1$, take
$T_h:=\lfloor T w_h\rfloor$, for 
$(h=1,\dots,n-1)$, and
$T_n:=T-\sum_{h=1}^{n-1}T_h$.
Then the $T_h$ are nonnegative, $\sum_{h=1}^n T_h=T$, and $\left|\frac{T_h}{T}-w_h\right|\le \frac{n-1}{T}$, for all  $h=1,\dots,n$.
Indeed, for $h<n$ the error is at most $1/T$, and since
$\sum_{h=1}^n\left(\frac{T_h}{T}-w_h\right)=0$,
the same bound for $h=n$ follows by summing the first $n-1$ errors.
Consequently, $\forall j\in L$:
\[
\left|
\sum_{h=1}^n\frac{T_h}{T}U_j(\bm{\overline M}^{h})
-
\sum_{h=1}^n w_hU_j(\bm{\overline M}^{h})
\right|
\le
\frac{n(n-1)U^\star}{T},
\]
so the rounding error tends to $0$ as $T\to\infty$.
Since also $e^{-2p^2T}\to 0$, $\lceil pT\rceil/T\to p$, and $\lceil \tau T\rceil/T\to \tau$ as $T\to\infty$, we may choose $T$ large enough and then set
$K:=\lceil cT\rceil,$
so that $\forall j\in L$:
\begin{equation}
\begin{aligned}
&\left|
\sum_{h=1}^n\frac{T_h}{T}U_j(\bm{\overline M}^{h})
-
\sum_{h=1}^n w_hU_j(\bm{\overline M}^{h})
\right|
\le \xi,
\\
&2U^\star e^{-2p^2T}(2+c)\le \xi,\\
&
\Delta\frac{\lceil pT\rceil}{T}
\le \frac{3\xi}{2},
\qquad
\Delta\frac{\lceil \tau T\rceil}{T}\le \frac{3\xi}{2}.
\end{aligned}
\end{equation}

For each $l\in L$ and each $h\in\{1,\dots,n\}$, write
$\bar H^h(i,a):=\mathrm{tot}_{\bm{\overline M}^{h}}(i,a)$
for the intended public aggregate in block $h$, and define
$\Gamma_{h,l}(i,a):=
\sum_{q\neq l}p_i^q\,\overline M_i^{h,q}(a)+p_i^l$.
$\Gamma_{h,l}(i,a)$ as the largest role-$i$, action-$a$ mass that can arise in block $h$ if every LLM other than $l$ follows the prescription and only $l$ changes its meta-action.

\myparagraph{Strategy Profile $S^{p,T,K}$.}
The public state records the inspected index $l\in L$ and deterministic
block counters for synchronization. In a phase-$l$ block:
\begin{enumerate}
    \item For each $h=1,\dots,n$, play the profile $\bm{\overline M}^{h}$ for exactly $T_h$ time periods. 

    \item Every LLM $q$ with $q\neq l$ plays $\overline M^{h,q}$ deterministically. 

    \item At each time period of block $h$, the inspectable LLM $l$:

    \begin{enumerate}
        \item probability $1-p$: play $\overline M^{h,l}$. 

        \item probability $p$: probe (=test move): uniformly select an extreme meta-action from $\{M^{\bm a}{:}\bm a{\in}\mathcal A\}$.
    \end{enumerate}

    \item Let $H^l_{h,s}$ be the realized public aggregate at the $s$-th time period of block~$h$. Call a time period \emph{discrepant} if $H^l_{h,s}\neq \bar H^h$, and let $d^l\coloneq\frac1T\sum_{h=1}^n\sum_{s=1}^{T_h}\mathbf 1_{\{H^l_{h,s}\neq \bar H^h\}}$
    be the discrepancy fraction in the phase-$l$ block.
    
    At the end of the block: If there exist $h,s,i,a$ such that
    $H^l_{h,s}(i,a)>\Gamma_{h,l}(i,a),$ it is an \emph{excess deviation} and the protocol 
    moves to phase $l+1$ and starts a fresh block. Else if $d^l>\tau$, it is a  \emph{frequency deviation}, and the protocol runs a punishment block of exactly $K$ time periods against LLM $l$ in which the other LLMs play a minmax profile against $l$ and $l$ best-responds; then return to phase $l$.
    Otherwise start a fresh phase-$l$ block.
    


\end{enumerate}

\myparagraph{Extensions.}
The construction and proof can be extended to the case where the population shares $p^j_i$ are drawn from commonly known distributions.
The result also continues to hold when these distributions vary over time, as long as the distributions in each period are commonly known.







\section{Discussion and Future Work}
\label{sec:conclusion-and-future-work}
This paper studies strategic interaction among providers guiding large populations.  We analyze this setting and show that it can lead to fundamental changes in equilibrium behavior. In particular, meta-game can sustain cooperation even when the underlying incentives are misaligned. The main message is that guidance providers, such as LLMs, should not be viewed only as tools we use, but rather as actors that can shape our collective behavior. Understanding this layer of interaction is therefore essential for understanding strategic behavior in AI-guided environments.

This paper is theoretical and forward-looking. It should not be read as an empirical claim about current deployments, but rather as identifying an important possible trajectory. How far this trajectory goes depends on deployment details, user behavior, and alignment to the model.

\myparagraph{Client Compliance.}
This paper assumes faithful compliance of clients with instructions issued by their guidance providers. While human users need not follow LLM outputs mechanically, they frequently do. Empirical work shows that users often follow algorithmic or AI-generated advice, even when it conflicts with their own judgment and interests (see~\cite{klingbeil2024trust,bucinca2021trust,passi2022overreliance} and references therein). One study even showed that participants (regrettably) agreed with ChatGPT's incorrect answer in 9 of 13  cases~\cite{kim2025fostering}. Additionally, relevant LLM clients may be software agents with limited independent reasoning capabilities. Thus, faithful compliance is a strong but well-grounded approximation for many real-world settings.

\myparagraph{Provider Objectives.} We model each provider's utility as the aggregate/average utility of its followers.   
The justification for this is two-fold. 
First, learning systems can develop strategic behavior even when it is not explicitly programmed. For example, reinforcement-learning pricing algorithms have been shown to learn collusive strategies, sustaining supracompetitive prices through reward-and-punishment dynamics without communicating with one another \cite{schwalbe2018algorithms,klein2021autonomous,hansen2021algorithmic}. Second, modern LLMs are optimized using human preference signals. In reinforcement learning from human feedback, human comparisons between outputs are used to train the LLMs \cite{dai2024safe}.  Deployed systems may additionally incorporate direct user feedback, such as approval or disapproval of individual responses. Thus, although the correspondence need not be exact, it is plausible that future LLMs could learn strategic behavior while optimizing objectives that are substantially aligned with aggregate/average user utility, as exhibited in feedback.


\myparagraph{Endogenous Population Shares.}
In our model, each client is served by a fixed provider. In practice, users may switch providers or consult different models in different situations. This makes the population shares endogenous and alters the model, as provider objectives may now also include user retention and guarantees to individual users.

\myparagraph{Stochastic and Evolving Environments.}
Our model studies repeated play of a fixed underlying game. In practice, AI-guided populations may face changing strategic environments. This suggests extending the framework to stochastic games, where the relevant interaction changes over time.


\myparagraph{GenAI as a New Strategic Layer.}
The key observation underlying this paper has implications beyond the particular model studied here. Once many individuals rely on a small number of LLMs for guidance, their behavior is no longer best understood as arising exclusively from independent individual decisions. Instead, individual actions are mediated, correlated, and potentially coordinated through the LLMs that guide them. This suggests that many classical models of strategic and collective behavior should be revisited in AI-guided environments. Voting, mechanism design, cooperative game theory, market competition, bargaining, auctions, matching, and public-good provision all take on a new form when the participants are influenced by a small number of strategic or strategically deployed AI systems. The broader message is therefore that LLMs introduce a new layer into social and economic interaction. Accounting for this layer is essential both for predicting behavior in AI-guided populations and for designing institutions that promote desirable collective outcomes.

\section*{Acknowledgement}
This research is partly supported by the Israel Science Foundation grants 2544/24, 3007/24 and 2697/22.

\bibliography{main}

\clearpage
\appendix


\onecolumn
\section{Formal Model}
We now present the formal model.   As is standard in economics and game theory, we model both clients and game instances as continua (see Footnote 1 in Section \ref{sec:model} for a discussion). Throughout, all mentioned sets and functions are assumed to be Borel measurable. Also, we employ a Fubini extension \cite{sun2006exact,podczeck2009existence} to ensure that all stated probabilities and integrals are well-defined.

\myparagraph{The Base Game.} The underlying interaction is a (one-shot) $m$-player game 
$$
B := \left(\mathcal{R}, (A_i)_{i \in \mathcal{R}}, (u_i)_{i \in \mathcal{R}} \right).
$$ 
Here,
\begin{itemize}
\item $\mathcal{R} := \{1,\dots,m\}$  
is a finite set of \emph{roles} (traditionally called \emph{players}, but in our context the term \emph{roles} is more appropriate). 
\item $A_i$ is the action set (=strategies) of role  $i$.  We denote ${\cal A} =\prod_{i=1}^m A_i$ - the set of pure strategy profiles - and $\Delta =\prod_{i=1}^m \Delta(A_i)$ - the set of mixed strategy profiles; $\bf{\alpha}$, and $\bf{\sigma}$ are used for elements of $\cal A$ and $\Delta$, respectively. For brevity, we assume finite-support mixed strategies. 
\item $u_i : {\cal A} \to \mathbb{R}$ is the utility of role $i$ given an action profile.  The function naturally extends to mixed strategy profiles.
\end{itemize}

\myparagraph{Users and Game Instances.}
There is a continuum of \emph{clients}, ${\cal C}:=\{ C_{x,i} \}_{x\in [0,1], i\in \{1,\ldots,m\}}$. For each role $i$
the set ${\cal C}_i:=\{ C_{x,i} \}_{x\in [0,1]}$ (of mass 1) is the set of clients who play role $i$.  The users participate in a continuum of game instances ${\cal B}:=\{ B_{y}\}_{y\in [0,1]}$.   

For each $i$, a measure-space isomorphism $I_i:{\cal C}_i\rightarrow [0,1]$ assigns clients to games.
The isomorphisms $\{ I_i \}_{i=1,\ldots,m}$ are drawn independently at random.
We term the entirety of these parallel games - the \emph{meta-game}.  

\myparagraph{Guidance Providers/LLMs.}
While clients \emph{participate} in the games as described, their actions \emph{may} not be decided on their own, but rather determined by an LLM to which they adhere.  
Specifically, there is a finite set of LLMs, ${\cal L} := \{1, \ldots, k\}$. Each client $C_{x,i}$ is guided by one LLM $g(C_{x,i})\in {\cal L}$. Let  $p_{i}^j$ be the fraction of clients of role $i$ guided by LLM $j$.  The $p_{i}^j$'s are common knowledge.

For game $B_y$, the \emph{guiding structure} of this game is $\mathbf{g}_y:=(g(I_1^{-1}(y)),\ldots,g(I_m^{-1}(y)))$. The guiding structure specifies the LLMs that guide the client of each role.

\myparagraph{Meta-Actions and Meta-Strategies.}
Each LLM guides its clients by telling them which strategies to follow, with possibly different strategies to different clients.     
Specifically, a \emph{meta-action} of LLM $j$ is a function $M^j$ such that for each $C_{x,i}$ - which $j$ guides - the value $M^j(C_{x,i})=\sigma_i\in \Delta(A_i)$ is the strategy (possibly mixed) that client $C_{x,i}$ is instructed to follow.  
Let ${\cal M}^j$ be the set of all possible such meta-actions for $j$. 
For $\sigma_i\in \Delta(A_i)$ and LLM $j$, by overloading notation, denote $M^j(\sigma_i)$ for the fraction of $j$'s clients of role $i$ that $j$ instructs to act $\sigma_i$. 
Note that since clients are assigned to games at random, only the induced distribution $(M^j(\sigma_i)){\sigma_i\in \Delta(A_i)}$ matters, not the specific function $M^j$.  Hence, in the main body of the paper, we identified meta-actions with distributions.

For ${\mathbf{M}}=(M^1,\ldots,M^k)$, denote $\text{supp}(\mathbf{M})=\bigtimes_{i=1}^m \{ \sigma_i : \exists j, M^j(\sigma_i)>0 \}$ - the (super)set of all action profiles with non-zero measure. Note that $\text{supp}({\mathbf{M}})$ is necessarily countable. 
A \emph{mixed-meta-action} of LLM $j$ is a distribution  $\Gamma^j\in \Delta({\cal M}^j)$.  
When choosing its (mixed)-meta-action, each LLM knows the roles of its users, but not the realization of the random assignments of users to games. In particular, the LLM does not even know if two of its users are playing each other.  This captures the situation that users consult the LLM in private, and games cannot be identified. 

\myparagraph{LLM Utilities.}
The utility of an LLM is the aggregate utility over all its clients. Formally, let $\mathbf{M}=(M^1,\ldots,M^k)$ be a profile of LLM meta-actions.  Then, the induced strategy profile on game $B_y$, with guiding structure $\mathbf{g}_y=(g_1,\ldots,g_m)$ is $\mathbf{M}_y:=(M^{g_1}(I_1^{-1}(y)),\ldots,M^{g_m}(I_m^{-1}(y))$. 
So, the expected utility of role $i$ in $B_y$ is $u_i(\mathbf{M}_y)$. 
The aggregate utility of LLM $j$'s clients is thus $U^j(\mathbf{M}):=\sum_{i=1}^{m}\int_{0}^1 u_{i}(\mathbf{M}_y)\cdot 1_{(\mathbf{g}_y)_i=j}dy$, where $1_{(\mathbf{g}_y)_i=j}$ is the characteristic function of the set $\{ \mathbf{g}_y : (\mathbf{g}_y)_i=j\}$.  
For mixed meta-action $\Gamma^j$, $U^j(\Gamma^j)=\mathbb E_{M^j\sim \Gamma^j}U^j(M^j)$.

We have:  
\begin{align*}
&U^j(\mathbf{M})=\sum_{i=1}^{m}\int_{0}^1 u_{i}(\mathbf{M}_y)\cdot 1_{(\mathbf{g}_y)_i=j } dy\\
&=\sum_{i=1}^{m}\sum_{\mathbf{g}\in{\cal L}^k:g_i=j}\sum_{\bf{\sigma}\in \text{supp}({\bf{M}}) }
\left( u_{i}({\bf{\sigma}}) \left( \int_{0}^1 1_{\mathbf{g}_y=\mathbf{g}}\cdot 1_{\mathbf{M}_y={\bf{\sigma} } }  dy \right)\right)\\
&=\sum_{i=1}^{m}\sum_{\mathbf{g}\in{\cal L}^k:g_i=j}\sum_{\bf{\sigma}\in \text{supp}({\bf{M}}) }
\left( u_{i}({\bf{\sigma}}) \left( \int_{0}^1 \left( \prod_{i=1}^m 1_{(\mathbf{g}_y)_i=g_i\wedge {(\mathbf{M}_y)_i=\sigma_i} } \right) dy    \right)\right)\\
&=\sum_{i=1}^{m}\sum_{\mathbf{g}\in{\cal L}^k:g_i=j}\sum_{\bf{\sigma}\in \text{supp}({\bf{M}}) }
\left( u_{i}({\bf{\sigma}}) \left( \int_{0}^1 \left( \prod_{i=1}^m 1_{g(I_i^{-1}(y))=g_i\wedge M^{g_i}_i(I_i^{-1}(y))=\sigma_i } \right) dy    \right)\right)\\
&=\sum_{i=1}^{m}\sum_{\mathbf{g}\in{\cal L}^k:g_i=j}\sum_{\bf{\sigma}\in \text{supp}({\bf{M}}) }
\left( u_{i}({\bf{\sigma}}) \left( \prod_{i=1}^m \left( \int_{0}^1 1_{g(C_{x,i})=g_i\wedge M^{g_i}_i(C_{x,i})=\sigma_i } dx \right)     \right)\right)\\
&=\sum_{i=1}^{m}\sum_{\mathbf{g}\in{\cal L}^k:g_i=j}\sum_{\bf{\sigma}\in \text{supp}({\bf{M}}) } \left( u_{i}({\bf{\sigma}}) 
\left( \prod_{i=1}^m p^{g_i}_i\cdot M^{g_i}(\sigma_i) \right)\right)\\
&=\sum_{i=1}^{m}\sum_{\bf{\sigma}\in \text{supp}({\bf{M}}) } \left( u_{i}({\bf{\sigma}}) 
\left( \sum_{\mathbf{g}\in{\cal L}^k:g_i=j}\prod_{i=1}^m p^{g_i}_i\cdot M^{g_i}(\sigma_i) \right)\right)%
\end{align*}
establishing \eqref{eq:compact-U} of Section \ref{sec:model}.

\myparagraph{Repetition.}
The meta-game may be repeated multiple times, in discrete time steps.  As standard in game theory, we assume an infinite (countable) number of repetitions. We denote $B^t_y$ the $y$-th instance in round $t$. 
Following each step $t$, all LLMs get to see all the actions played in all games of the round in which their clients participated. That is, they see $h_t^j:=( (\sigma_i)_y^t)_{i\in \{1,\ldots,m\}, y: j\in \bm{g}_y}$, where $\sigma_{i,y}^t$ is the action played by role $i$ in the game $B_y$.  In addition, each LLM knows the meta-action it played in the round.
A \emph{$T$-history} for LLM $j$ is the sequence $H_T^j=(h^j_t,M^j_t)_{t=1}^T$, where $h_t$ are the actions played at time $t$, and $M^j_t$ is the meta-action played by $j$ at this time. We denote by ${\cal H}_T^j$ the set of all possible $T$-histories for $j$, and ${\cal H}_*^j := \bigcup_{T \in \mathbb{N}} {\cal H}_T^j$.

A pure \emph{strategy} for LLM $j$ is a mapping that assigns to every $T$-history $\bm{h}_T^j$, a  meta-action to be played at time period $T+1$:
$$
S^j \colon \mathcal{H}_*^j \to \Delta(\mathcal{M})
$$
Note that when choosing its meta-action the LLM is assumed to know the actions played in the games, but not who guided each player. 
A mixed strategy of $j$ is a distribution over pure strategies. 
The utility of $j$ from the repeated game is the discounted sum of its utilities in the stage-games, with a discount factor $\delta\in(0,1)$.

\section{Omitted Proofs}

\subsection{One Shot}
\roleHomo*

\begin{proof}
Let $\bm M$ be an equilibrium, and fix an LLM $j$. Consider the
meta-action $M^j$, which is a distribution over mixed-action profiles
\[
\bm\sigma=(\sigma_1,\ldots,\sigma_m)
\in
\Delta(A_1)\times\cdots\times\Delta(A_m).
\]
Define a role-homogeneous meta-action
\[
\widehat M^j \in \Delta(A_1\times\cdots\times A_m)
\]
by setting, for every deterministic action profile
$a=(a_1,\ldots,a_m)\in A_1\times\cdots\times A_m$,
\[
\widehat M^j(a_1,\ldots,a_m)
:=
\sum_{\bm\sigma}
M^j(\bm\sigma)
\prod_{r=1}^m \sigma_r(a_r).
\]
Equivalently, $\widehat M^j$ first draws a mixed-action profile
$\bm\sigma=(\sigma_1,\ldots,\sigma_m)$ according to $M^j$, then draws each
deterministic action $a_r$ independently according to $\sigma_r$, and finally
instructs all role-$r$ clients guided by LLM $j$ to play $a_r$.

Thus, for every deterministic action profile $a=(a_1,\ldots,a_m)$, the
probability that $M^j$ ultimately induces $a$ is exactly
$\widehat M^j(a)$. Therefore, by multilinearity of expected utility, replacing
$M^j$ by $\widehat M^j$ does not change the expected payoff of any LLM, against
any profile of the other LLMs. Hence, for every LLM $\ell$ and every
meta-action profile $N^{-j}$,
\[
U^\ell(\widehat M^j,N^{-j})
=
U^\ell(M^j,N^{-j}).
\tag{1}
\]

Since (1) holds against every profile $N^{-j}$, replacing $M^j$ by
$\widehat M^j$ preserves all payoff comparisons involving unilateral
deviations: for LLM $j$ directly, and for any $\ell\neq j$ because (1) applies
also when $\ell$ deviates. Hence all best-response conditions that held at
$\bm M$ continue to hold at $(\widehat M^j,M^{-j})$. Thus
$(\widehat M^j,M^{-j})$ is an equilibrium with the same payoff vector.
Repeating the argument for $j=1,\ldots,k$ gives the desired
role-homogeneous equilibrium.
\end{proof}

\singleRole*
\begin{proof}
Assume every LLM is single-role. For each LLM $j$, let $i(j)$ be the unique
role such that $p_{i(j)}^j>0$. For every role $i$, define the aggregate mixed
action
\[
M_i^{\mathrm{agg}}
=
\sum_{j\in L} p_i^j M_i^j .
\]
Let
\[
\bm M^{\mathrm{agg}}
=
(M_1^{\mathrm{agg}},\ldots,M_m^{\mathrm{agg}}).
\]

We first show that every meta-game Nash equilibrium induces a mixed Nash
equilibrium of the base game.

Let $\bm M$ be a Nash equilibrium of the meta-game. Fix a role $i$, and let
$j$ be any LLM with $p_i^j>0$. Since $j$ is single-role, it guides clients
only in role $i$. Hence its utility is
\[
U_j(\bm M)
=
p_i^j\,u_i(M_i^j,M_{-i}^{\mathrm{agg}}).
\tag{1}
\]
Indeed, all clients guided by $j$ are role-$i$ clients, and their opponents
in the other roles are drawn according to the aggregate mixed actions
$M_{-i}^{\mathrm{agg}}$.

Since $\bm M$ is a Nash equilibrium, LLM $j$ cannot profitably change its
role-$i$ instruction. Therefore, for every $\alpha_i\in\Delta(A_i)$,
\[
u_i(M_i^j,M_{-i}^{\mathrm{agg}})
\ge
u_i(\alpha_i,M_{-i}^{\mathrm{agg}}).
\tag{2}
\]
Thus every LLM $j$ with $p_i^j>0$ uses a best response to
$M_{-i}^{\mathrm{agg}}$.

Now, since
\[
M_i^{\mathrm{agg}}
=
\sum_{j:\,p_i^j>0} p_i^j M_i^j,
\]
and since $u_i(\cdot,M_{-i}^{\mathrm{agg}})$ is linear in its first argument,
we get, for every $\alpha_i\in\Delta(A_i)$,
\[
\begin{aligned}
u_i(M_i^{\mathrm{agg}},M_{-i}^{\mathrm{agg}})
&=
\sum_{j:\,p_i^j>0} p_i^j
u_i(M_i^j,M_{-i}^{\mathrm{agg}}) \\
&\ge
\sum_{j:\,p_i^j>0} p_i^j
u_i(\alpha_i,M_{-i}^{\mathrm{agg}}) \\
&=
u_i(\alpha_i,M_{-i}^{\mathrm{agg}}).
\end{aligned}
\]
Hence $M_i^{\mathrm{agg}}$ is a best response to
$M_{-i}^{\mathrm{agg}}$. Since this holds for every role $i$, the aggregate
profile $\bm M^{\mathrm{agg}} $
is a mixed Nash equilibrium of the base game.

Conversely, let $\bm x=(x_1,\ldots,x_m)$ be a mixed Nash equilibrium of the base game. Define a meta-action profile
$\bm M$ by setting
\[
M_i^j = x_i
\qquad
\text{for every } j \text{ with } p_i^j>0.
\]
Then, for every role $i$,
\[
M_i^{\mathrm{agg}}
=
\sum_{j\in L} p_i^j M_i^j
=
\sum_{j\in L} p_i^j x_i
=
x_i,
\]
because $\sum_{j\in L}p_i^j=1$.

Now fix any LLM $j$, and let $i=i(j)$. Since $\bm x$ is a mixed Nash
equilibrium of the base game,
\[
u_i(x_i,x_{-i})
\ge
u_i(\alpha_i,x_{-i})
\]
for every $\alpha_i\in\Delta(A_i)$. By (1), LLM $j$'s meta-game utility is
just $p_i^j$ times the corresponding role-$i$ payoff. Therefore no deviation
by $j$ can improve its utility.

Thus no LLM has a profitable deviation, so $\bm M$ is a Nash equilibrium of
the meta-game. Its aggregate outcome is exactly $\bm x$. Therefore the
aggregate outcomes of one-shot meta-game Nash equilibria are exactly the mixed
Nash equilibria of the base game.
\end{proof}

\pd*
\begin{proof}
By the role-homogeneous reduction, we may restrict attention to meta-actions
supported on
\[
\{CC,CD,DC,DD\}.
\]

First observe that, for $p$ sufficiently large, $DD$ cannot be a best response
for LLM $1$. If LLM $1$ plays $CC$, then even under the worst behavior of
LLM $2$, its payoff is at least
\[
pX+(1-p)Z.
\]
If LLM $1$ plays $CD$ or $DC$, then even under the worst behavior of LLM $2$,
its payoff is at least
\[
p\frac{Z}{2}+(1-p)\frac{Z+Y}{2}.
\]
By contrast, if LLM $1$ plays $DD$, then even under the best behavior of
LLM $2$, its payoff is at most
\[
pY.
\]
Since
\[
X>Y
\]
both lower bounds above are strictly larger than $pY$ for all sufficiently
large $p$. Hence $DD$ is not a best response for LLM $1$ when $p$ is
sufficiently large.

Therefore, in any equilibrium and for sufficiently large $p$,
\[
\operatorname{supp}(M^1)\subseteq \{CC,CD,DC\}.
\]
Against any mixture supported on $\{CC,CD,DC\}$, LLM $2$'s unique best
response is $DD$: defecting is strictly better than cooperating against $C$,
since
\[
0>X,
\]
and strictly better than cooperating against $D$, since
\[
Y>Z.
\]
Thus, in any equilibrium for sufficiently large $p$, LLM $2$ plays $DD$.

It remains to compare payoffs. If LLM $1$ plays $CC$, then
\[
\bar U^1=pX+(1-p)Z,
\qquad
\bar U^2=(1-p)Y.
\]
As $p\to 1$, these converge to $X$ and $0$, respectively, and $X<0$. Hence,
for sufficiently large $p$,
\[
\bar U^1<\bar U^2.
\]

If LLM $1$ plays $CD$ or $DC$, then, against $DD$,
\[
\bar U^1
=
p\frac{Z}{2}+(1-p)\frac{Z+Y}{2},
\qquad
\bar U^2
=
p\frac{Y}{2}+(1-p)Y.
\]
As $p\to 1$, these converge to $Z/2$ and $Y/2$, respectively. Since $Y>Z$,
again for sufficiently large $p$,
\[
\bar U^1<\bar U^2.
\]

Finally, if LLM $1$ mixes over $\{CC,CD,DC\}$, the payoff difference
\[
\bar U^2-\bar U^1
\]
is the corresponding convex combination of the payoff differences in the
cases above. Since each of these differences is positive for sufficiently
large $p$, the same inequality holds for any such mixture.

Therefore, for all sufficiently large $p$, every equilibrium satisfies
\[
\bar U^1<\bar U^2.
\]
\end{proof}

\coordination*
\begin{proof}
By the role-homogeneous reduction, we may restrict attention to meta-actions
supported on deterministic instruction profiles in
\[
\{0,1\}^3.
\]

Let $q=1-p$. Since the total payoff in every realized base-game instance is
$100$, the population-weighted average payoff is always
\[
p\bar U^1+q\bar U^2=\frac{100}{3}.
\tag{1}
\]

We show that LLM $1$ can guarantee strictly more than $100/3$. Consider the
strategy in which LLM $1$ randomizes uniformly between
\[
000
\qquad\text{and}\qquad
111.
\]

Fix any deterministic instruction profile of LLM $2$. If LLM $2$ plays either
$000$ or $111$, then, with probability $1/2$, the two LLMs choose the same
coordinated instruction, and LLM $1$'s clients get $100/3$. With probability
$1/2$, they choose opposite coordinated instructions. In that case, a client
of LLM $1$ is in the majority unless both other roles are guided by LLM $2$,
which happens with probability $q^2$. Therefore its expected payoff in this
case is
\[
p^2\frac{100}{3}+2pq\cdot 50.
\]
Hence LLM $1$'s guaranteed payoff against a coordinated instruction of LLM $2$
is
\[
\frac12\cdot \frac{100}{3}
+
\frac12\left(
p^2\frac{100}{3}+2pq\cdot 50
\right)
=
\frac{100}{3}+\frac{50}{3}q(2p-1).
\]
Since $p>q$, we have $2p-1>0$. Thus this quantity is strictly larger than
\[
\frac{100}{3}.
\]

If LLM $2$ plays a non-coordinated instruction, the payoff of LLM $1$ under
the same strategy is even larger, since in the opposite-instruction case LLM
$2$'s clients are no longer all coordinated against LLM $1$. Therefore LLM
$1$ can guarantee a payoff strictly above $100/3$.

Consequently, in every equilibrium,
\[
\bar U^1>\frac{100}{3}.
\]
Using (1), this implies
\[
\bar U^2<\frac{100}{3}.
\]
Therefore,
\[
\bar U^1>\bar U^2.
\]
\end{proof}

\myparagraph{Bounded Coordination.}
The previous two examples point in opposite directions: in the Prisoner's Dilemma, larger LLMs do worse; in majority coordination, larger LLMs do better. The relationship between market share and payoff need not be monotone in either direction. We now give an example in which an intermediate-size LLM obtains the highest average payoff.

There are ten symmetric roles and a common action set
$
A_i=\{1,\ldots,100\}.
$
Given an action taken by each of the roles, let \(W\) be the set of roles whose action was chosen by exactly four roles:
\[
W=\{r: |\{r' : a_{r'}=a_r\}|=4\}.
\]
If \(W\neq\emptyset\), a total prize of \(100\) is divided equally among the roles in \(W\). If \(W=\emptyset\), all roles receive \(0\). Thus, one exact group of four receives the whole prize, while two exact groups of four split the same fixed prize.

There are three LLMs. The large LLM guides roles \(1,\ldots,5\), the medium LLM guides roles \(6,\ldots,9\), and the small LLM guides role \(10\). Hence their market shares are $0.5, 0.4,$ and $0.1$.

Consider the following mixed role-homogeneous profile. The large LLM draws \(x\) uniformly from \(\{1,\ldots,100\}\), instructs roles \(1,\ldots,4\) to play \(x\), and instructs role \(5\) to play the next action cyclically. The medium LLM draws \(y\) uniformly and instructs all four of its roles to play \(y\). The small LLM draws \(z\) uniformly and instructs its single role to play \(z\).

\begin{restatable}{claim}{boundedcoordination}
The profile above is a Nash equilibrium. The average utilities of clients guided by the large, medium, and small LLMs are, respectively
\(
9.998, 12.25,
\)
and $0$.
Thus, the medium LLM obtains the highest average utility, even though it does not have the largest market share.
\end{restatable}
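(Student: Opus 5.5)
The plan is to verify the two parts of the claim separately: first the payoff computation, then the equilibrium property. In this example each role is guided entirely by one LLM, so a game instance is determined by the three independent uniform draws $x,y,z\in\{1,\ldots,100\}$. Under the profile, the large LLM produces a block of four at $x$ and a singleton at $x{+}1$ (cyclically). The medium LLM produces a block of four at $y$, and the small LLM produces a singleton at $z$.

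\textbf{Payoffs.} I will condition on the relative position of $y$ with respect to $x$.
\begin{itemize}
\item If $y=x$ (probability $0.01$), no value is chosen by exactly four roles, so every role gets $0$.
\item If $y=x{+}1$ (probability $0.01$), the medium block merges with role $5$ into a group of five, which is not exact. The large block is exact iff $z\neq x$ (probability $0.99$), and then it alone gets $100$.
\item If $y\notin\{x,x{+}1\}$ (probability $0.98$), there are three sub-cases according to $z$.
  \begin{itemize}
  \item If $z\notin\{x,y\}$ (probability $0.98$), both blocks are exact and each LLM's block receives $50$.
  \item If $z=y$, only the large block wins $100$.
  \item If $z=x$, only the medium block wins $100$.
  \end{itemize}
\end{itemize}
This gives a large total of $0.99+0.98(49+1)=49.99$, hence an average of $49.99/5=9.998$ per role. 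The medium total is $0.98\cdot 50=49$, hence $12.25$ per role. The small role is always $0$: the only other groups have sizes $4$ and $1$, so a singleton can never complete an exact group of four.

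\textbf{Equilibrium.} By the role-homogeneous reduction (Lemma~\ref{} on role-homogeneity, \texttt{roleHomo}) and linearity, it suffices to check that no \emph{deterministic} instruction profile of a single LLM beats its payoff, with the other two LLMs fixed at their uniform randomizations.

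\emph{Small LLM.} It cannot gain under any deviation, since it gets $0$ and it never meets a group of three.

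\emph{Medium LLM.} I will classify its deterministic profiles by the partition of its four roles into value classes: $\{4\},\{3,1\},\{2,2\},\{2,1,1\},\{1,1,1,1\}$.
\begin{itemize}
\item For $\{4\}$, the payoff is $49$ for every choice of $y$, because $x$ and $z$ are uniform; the computation above applies verbatim.
\item For the other partitions, a medium class of size $s<4$ can be in an exact group only if exactly $4-s$ outside roles land on its value. The large LLM contributes only in chunks of $4$ (at $x$) or $1$ (at $x{+}1$), and the small LLM contributes a chunk of $1$. So each such event requires a coincidence of probability at most about $0.02$. Moreover, splitting cannot block the large group except through the same coincidences.
\end{itemize}
Bounding the expected prize captured in each partition should give totals well below $49$. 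For instance, $\{3,1\}$ gets roughly $0.02\cdot 100\cdot\tfrac34$ plus lower-order terms.

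\emph{Large LLM.} I will do the analogous enumeration over partitions of its five roles. For $\{5\}$ it gets nothing unless $y=x$ or $z=x$ interferes, which yields $0$ anyway. For $\{4,1\}$, the singleton's position does not matter by uniformity, except that placing it on the medium's value destroys the medium's exactness. Since $y$ is uniform and independent, every placement averages to the same $49.99$, or to at most that. Partitions without a class of $4$ again need coincidences of probability $O(0.01)$ to win anything, so they are far below $49.99$.

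The main obstacle is making this enumeration rigorous without grinding every case. The key structural fact I will isolate is the following. Against opponents whose value choices are uniform and independent, the probability that any given value receives a specified nonzero number of outside roles is at most $0.02$. Hence any class of size less than $4$ earns at most $0.02\cdot 100$ per class in expectation, which is too little to beat the payoff of a full block of $4$.
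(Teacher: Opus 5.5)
Your proposal is correct and follows essentially the paper's route: the same coincidence-based case analysis over $(x,y,z)$ gives the totals $49.99$, $49$ and $0$, and the equilibrium check uses the role-homogeneous reduction to restrict to deterministic deviations, then argues that only a single exact block of four earns a non-negligible payoff. You are more explicit than the paper, which only asserts these steps: you check that the large LLM's singleton placement leaves its payoff at $49.99$, and you bound every partition without a class of four by the $0.02$ coincidence estimate; your remark that the other groups have sizes $4$ and $1$ fails when $y\in\{x,x+1\}$, but your later invariant that no value ever receives exactly three other roles covers the small LLM in every case.
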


The intuition is that the payoff opportunity is bounded. The medium LLM controls exactly four roles, so by coordinating them, it creates a winning group whenever no other role chooses the same action. The large LLM controls five roles: it can create one exact group of four, but the fifth role cannot be included without destroying exactness. As a result, the large LLM obtains slightly more total payoff than the medium LLM, but this payoff is averaged over five roles rather than four. The small LLM is too small to form a winning group. Thus the medium LLM is best positioned: it is large enough to form an exact winning group, but not so large that its payoff is diluted over extra roles that cannot join that group.

\begin{proof}
By the role-homogeneous reduction, it suffices to consider deterministic
deviations. Since utilities are linear in an LLM's mixed action, checking
deterministic deviations is enough.

We first compute the payoffs of the proposed profile. The large LLM chooses
$x$ uniformly from $\{1,\ldots,100\}$, instructs roles $1,\ldots,4$ to play
$x$, and instructs role $5$ to play $x+1$ cyclically. The medium LLM chooses
$y$ uniformly and instructs roles $6,\ldots,9$ to play $y$. The small LLM
chooses $z$ uniformly and instructs role $10$ to play $z$.

For the large LLM, the four roles choosing $x$ form an exact group of four
unless either $y=x$ or $z=x$. Thus, they form an exact group of four in
\[
99\cdot 99
\]
out of $100^2$ choices of $(y,z)$. Among these, the medium LLM also forms a
disjoint exact group of four precisely when
\[
y\notin\{x,x+1\}
\qquad\text{and}\qquad
z\notin\{x,y\}.
\]
This gives
\[
98\cdot 98
\]
cases. Hence the large LLM's total expected payoff is
\[
\frac{1}{100^2}
\left(
98\cdot 98\cdot 50
+
(99\cdot 99-98\cdot 98)\cdot 100
\right)
=
49.99.
\]
Since it guides five roles, its average utility is
\[
\bar U^L=\frac{49.99}{5}=9.998.
\]

For the medium LLM, its four roles form an exact group of four unless
\[
y=x,\qquad y=x+1,\qquad\text{or}\qquad y=z.
\]
Thus they form an exact group of four in
\[
98\cdot 99
\]
out of $100^2$ choices of $(x,z)$. Among these, the large LLM's four-role
group is also exact precisely when
\[
z\neq x,
\]
which gives
\[
98\cdot 98
\]
cases. Therefore the medium LLM's total expected payoff is
\[
\frac{1}{100^2}
\left(
98\cdot 98\cdot 50
+
(98\cdot 99-98\cdot 98)\cdot 100
\right)
=
49.
\]
Since it guides four roles, its average utility is
\[
\bar U^M=\frac{49}{4}=12.25.
\]

The small LLM never receives positive payoff. Its single role cannot form an
exact group of four: if it matches the large group or the medium group, it
turns a group of four into a group of five; if it matches the large singleton,
it creates only a group of two; otherwise it is alone. Hence
\[
\bar U^S=0.
\]

It remains to show that the profile is a Nash equilibrium.

Consider the large LLM. Against the proposed strategies of the medium and
small LLMs, the best it can do is to create one exact group of four. Creating
such a group gives the calculation above: it wins unless the medium or small
LLM hits the same action, and it splits the prize when the medium LLM
simultaneously creates a disjoint exact group of four. This yields total
expected payoff $49.99$. Any deviation that does not create an internal group
of four can receive payoff only through coincidences with the small role, and
is strictly worse. Hence the large LLM cannot improve.

Consider the medium LLM. Against the proposed strategies of the large and
small LLMs, the best it can do is to put all four of its roles on a common
action. This creates an exact group of four unless the action is hit by the
large block, the large singleton, or the small role. The calculation above
gives total expected payoff $49$. Any deviation that does not put all four
roles on one action can be rewarded only through accidental completion by the
large singleton or the small role, and is strictly worse. Hence the medium
LLM cannot improve.

Finally, the small LLM cannot improve because, against the proposed large and
medium strategies, its single role can never be part of an exact group of
four, regardless of which action it chooses.

Thus no LLM has a profitable deviation, so the proposed profile is a Nash
equilibrium. The average utilities are
\[
9.998,\qquad 12.25,\qquad 0,
\]
so the medium LLM obtains the highest average utility.
\end{proof}

\subsection{The Folk Theorem}

\theFolkTheorem*

\myparagraph{Construction.}
\label{par:startOfProof}
Fix a number $\xi>0$ such that
\[
12\xi<\epsilon, \quad 5\xi<\gamma
\]
Choose one feasible strictly individually rational vector $s$. Since the feasible set is convex, for some sufficiently small $\lambda>0$ the vector
\[
\hat r:=(1-\lambda)r+\lambda s
\]
is feasible, strictly individually rational, and satisfies
\[
\|\hat r-r\|_\infty\le \xi.
\]

By feasibility, choose finitely many meta-action profiles
\[
\bm{\overline M}^{1},\ldots,\bm{\overline M}^{n}
\]
and weights $(w_h)_{h=1}^n$ such that
\[
\sum_{h=1}^n w_h U_j(\bm{\overline M}^{h})=\hat r_j,
\qquad
\sum_{h=1}^n w_h=1
\qquad
(\forall j\in L).
\]

Throughout the proof, when a meta-action component $M_i^j$ is evaluated at a pure action $a$, this means the induced action mass
\[
M_i^j(a):=\mathbb E_{S_i\sim M_i^j}[S_i(a)].
\]
For each $\bm a\in\mathcal A$, let $M^{\bm a}$ be the extreme meta-action defined by
\[
M_i^{\bm a}(a')=\mathbf 1_{\{a_i=a'\}}.
\]

For any meta-action profile $\bm M$, let $\mathrm{tot}_{\bm M}(i,a)$ denote the public aggregate mass of role-$i$ clients who play action $a$. By the model definition, and since the guiding vector is independently drawn across roles,
\begin{equation}
\mathrm{tot}_{\bm M}(i,a)=\sum_{q=1}^k p_i^q\,M_i^q(a).
\label{eq:tot-linear-lean}
\end{equation}

Strictly speaking, an LLM observes only instances involving its own clients. However, with a continuum of games each period and clients uniformly randomly matched across them, these observations reveal the aggregate action frequencies almost surely.

Let
\[
U_j^{\max}:=\max_{\bm M\in\bm{\mathcal M}}U_j(\bm M),
\qquad
\Delta:=\max_{j\in L}\max_{\bm M,\bm M'\in\bm{\mathcal M}}|U_j(\bm M)-U_j(\bm M')|,
\]
\[
U^\star:=\max_{j\in L}\max_{\bm M\in\bm{\mathcal M}}|U_j(\bm M)|.
\]
If $\Delta=0$, then every profile gives the same payoff vector, and the theorem is immediate. Hence assume $\Delta>0$.

Because $\hat r$ is strictly individually rational and $L$ is finite, there exists $c>0$ such that
\begin{equation}
\frac{U_j^{\max}+c\,\mathrm{IR}_j}{1+c}\le \hat r_j
\qquad
(\forall j\in L).
\label{eq:punish-ratio-lean}
\end{equation}

Choose a number $p>0$ so small that, writing
\[
\tau:=3p,
\]
we have
\[
\tau<1,
\qquad
\Delta\tau\le \xi.
\]

For each integer $T\ge 1$, take
\[
T_h:=\lfloor T w_h\rfloor
\qquad
(h=1,\dots,n-1),
\qquad
T_n:=T-\sum_{h=1}^{n-1}T_h.
\]
Then the $T_h$ are nonnegative, $\sum_{h=1}^n T_h=T$, and
\[
\left|\frac{T_h}{T}-w_h\right|\le \frac{n-1}{T}
\qquad
(\forall h=1,\dots,n).
\]
Indeed, for $h<n$ the error is at most $1/T$, and since
\[
\sum_{h=1}^n\left(\frac{T_h}{T}-w_h\right)=0,
\]
the same bound for $h=n$ follows by summing the first $n-1$ errors.
Consequently,
\[
\left|
\sum_{h=1}^n\frac{T_h}{T}U_j(\bm{\overline M}^{h})
-
\sum_{h=1}^n w_hU_j(\bm{\overline M}^{h})
\right|
\le
\frac{n(n-1)U^\star}{T}
\qquad
(\forall j\in L),
\]
so the rounding error tends to $0$ as $T\to\infty$.
Since also $e^{-2p^2T}\to 0$, $\lceil pT\rceil/T\to p$, and $\lceil \tau T\rceil/T\to \tau$ as $T\to\infty$, we may choose $T$ large enough and then set
\[
K:=\lceil cT\rceil,
\]
so that

\begin{equation}
\begin{aligned}
\left|
\sum_{h=1}^n\frac{T_h}{T}U_j(\bm{\overline M}^{h})
-
\sum_{h=1}^n w_hU_j(\bm{\overline M}^{h})
\right|
&\le \xi
\qquad
(\forall j\in L),
\qquad
2U^\star e^{-2p^2T}(2+c)\le \xi,\\
\Delta\frac{\lceil pT\rceil}{T}
&\le \frac{3\xi}{2},
\qquad
\Delta\frac{\lceil \tau T\rceil}{T}\le \frac{3\xi}{2}.
\end{aligned}
\label{eq:large-T-lean}
\end{equation}

For each $l\in L$ and each $h\in\{1,\dots,n\}$, write
\[
\bar H^h(i,a):=\mathrm{tot}_{\bm{\overline M}^{h}}(i,a)
\]
for the intended public aggregate in block $h$, and define
\[
\Gamma_{h,l}(i,a):=
\sum_{q\neq l}p_i^q\,\overline M_i^{h,q}(a)+p_i^l.
\]
By \eqref{eq:tot-linear-lean}, $\Gamma_{h,l}(i,a)$ is the largest role-$i$, action-$a$ mass that can arise in block $h$ if every LLM other than $l$ follows the prescription and only $l$ changes its meta-action.

\myparagraph{Strategy Profile $S^{p,T,K}$.}
The public state records the current inspected index $l\in L$ together with the deterministic block counters needed to synchronize the schedule. 
Inside a phase-$l$ block: \\
1. For each $h=1,\dots,n$, play the profile $\bm{\overline M}^{h}$ for exactly $T_h$ time periods. 

2. Every LLM $q$ with $q\neq l$ plays $\overline M^{h,q}$ deterministically. 

3. At each time period of block $h$, the inspectable LLM $l$:

\quad (a) probability $1-p$: play $\overline M^{h,l}$. 

\quad (b) probability $p$: probe (=test move): uniformly select an extreme meta-action from $\{M^{\bm a}{:}\bm a{\in}\mathcal A\}$.

4. Let $H^l_{h,s}$ denote the realized public aggregate at the $s$-th time period of block $h$. 

Call a time period \emph{discrepant} if $H^l_{h,s}\neq \bar H^h$, and let
$d^l:=\frac1T\sum_{h=1}^n\sum_{s=1}^{T_h}\mathbf 1_{\{H^l_{h,s}\neq \bar H^h\}}$
be the discrepancy fraction in the phase-$l$ block.
At the end of the block:

- If there exist $h,s,i,a$ such that
    $H^l_{h,s}(i,a)>\Gamma_{h,l}(i,a),$ it is an \emph{excess deviation} and the protocol 
    moves to phase $l+1$ and starts a fresh block.
    
- Else, if $d^l>\tau$, it is a  \emph{frequency deviation}, and the protocol runs a punishment block of exactly $K$ time periods against LLM $l$ in which the other LLMs play a minmax profile against $l$ and $l$ best-responds; then return to phase $l$.

- Otherwise start a fresh phase-$l$ block.

For each phase index $l$ and each LLM $j$, consider the honest continuation starting from a fresh phase-$l$ block. Let $R_{j,l}$ be the total undiscounted payoff of $j$ over one honest cycle (one phase block and a possible immediate punishment block), and let $\ell_l\in\{T,T+K\}$ be the cycle length.

\begin{lemma}[Discounted average versus plain average on a fixed cycle]
\label{lem:discount-cycle-lean}
Fix an integer $S\ge 1$ and a tolerance $\zeta>0$.
Then there exists a cutoff below $1$ such that for every $\delta$ above that cutoff and every sequence $x_1,\dots,x_S\in[-U^\star,U^\star]$,
\[
\left|
\frac{1-\delta}{1-\delta^S}\sum_{s=1}^S \delta^{s-1} x_s
-
\frac1S\sum_{s=1}^S x_s
\right|
\le
\zeta.
\]
\end{lemma}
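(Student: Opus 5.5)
The plan is to write the difference as a single weighted sum and bound the weights uniformly. Set $\omega_s(\delta):=\frac{(1-\delta)\delta^{s-1}}{1-\delta^S}$ for $s=1,\dots,S$. These are positive and satisfy $\sum_{s=1}^S\omega_s(\delta)=1$ by the geometric sum formula. Hence the quantity to bound is
\[
\sum_{s=1}^S\Bigl(\omega_s(\delta)-\tfrac1S\Bigr)x_s,
\]
whose absolute value is at most $U^\star\sum_{s=1}^S\bigl|\omega_s(\delta)-\tfrac1S\bigr|$, since each $|x_s|\le U^\star$. This bound no longer depends on the sequence $x_1,\dots,x_S$, so the required uniformity over all sequences comes for free. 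It then suffices to show that the finite sum of weight errors tends to $0$ as $\delta\to1^-$.

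Next I show each $\omega_s(\delta)\to 1/S$ as $\delta\to1^-$. Write $\frac{1-\delta^S}{1-\delta}=\sum_{t=0}^{S-1}\delta^t$, so $\omega_s(\delta)=\delta^{s-1}\big/\sum_{t=0}^{S-1}\delta^t$. This is a ratio of polynomials, continuous at $\delta=1$, where it equals $1/S$.

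An explicit estimate is also available. For $\delta\in(0,1)$ we have $\delta^{S-1}\le\delta^{s-1}\le1$ and $S\delta^{S-1}\le\sum_t\delta^t\le S$. Together these give
\[
\frac{\delta^{S-1}}{S}\le\omega_s(\delta)\le\frac{1}{S\delta^{S-1}},
\]
so $|\omega_s(\delta)-1/S|\le\frac1S(\delta^{-(S-1)}-1)$. Summing over $s$, the total error is at most $U^\star(\delta^{-(S-1)}-1)$.

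To finish, choose the cutoff $\underline\delta<1$ so that $\delta^{-(S-1)}-1\le\zeta/U^\star$ for all $\delta\ge\underline\delta$. This is possible because $\delta^{-(S-1)}$ decreases to $1$ as $\delta\to1^-$. If $U^\star=0$, the statement is trivial. There is no genuine obstacle here; the only point needing care is that the cutoff depends only on $S$, $\zeta$ and $U^\star$, and not on the sequence, which the weight-error bound guarantees.
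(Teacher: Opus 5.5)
Your proposal is correct and follows essentially the same route as the paper: both write the difference as $\sum_s(\omega_s(\delta)-\tfrac1S)x_s$, bound it by a multiple of $U^\star\sum_s|\omega_s(\delta)-\tfrac1S|$ uniformly in the sequence, and use $1-\delta^S=(1-\delta)\sum_{t=0}^{S-1}\delta^t$ to show that each weight tends to $1/S$. Your explicit estimate $U^\star(\delta^{-(S-1)}-1)$ is a small bonus: it gives a concrete cutoff and the constant $U^\star$ in place of the paper's looser $2U^\star$.
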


\begin{proof}
By the triangle inequality,
\[
\left|
\frac{1-\delta}{1-\delta^S}\sum_{s=1}^S \delta^{s-1} x_s
-
\frac1S\sum_{s=1}^S x_s
\right|
\le
2U^\star
\sum_{s=1}^S
\left|
\frac{(1-\delta)\delta^{s-1}}{1-\delta^S}-\frac1S
\right|.
\]
For each fixed $s$,
\[
\frac{(1-\delta)\delta^{s-1}}{1-\delta^S}
=
\delta^{s-1}\frac{1-\delta}{1-\delta^S}
=
\frac{\delta^{s-1}}{1+\delta+\cdots+\delta^{S-1}},
\]
where we used the finite geometric-sum identity
\[
1-\delta^S=(1-\delta)(1+\delta+\cdots+\delta^{S-1}).
\]
As $\delta\uparrow 1$, we have $\delta^{s-1}\to 1$ and
\[
1+\delta+\cdots+\delta^{S-1}\to S,
\]
so
\[
\frac{(1-\delta)\delta^{s-1}}{1-\delta^S}
\longrightarrow
\frac1S.
\]
Since the sum has only finitely many terms, the right-hand side tends to $0$ as $\delta\uparrow 1$, so it is at most $\zeta$ for all $\delta$ sufficiently close to $1$.
\end{proof}

For each phase index $l$ and each LLM $j$, consider the honest continuation starting from a fresh phase-$l$ block. Let $R_{j,l}$ be the total undiscounted payoff of $j$ over one honest cycle (one phase block and a possible immediate punishment block), and let $\ell_l\in\{T,T+K\}$ be the cycle length.

\begin{lemma}[Honest-cycle average payoffs]
\label{lem:on-path-cycle-lean}
For every phase index $l$ and every LLM $j$,
\[
\left|
\frac{\mathbb E[R_{j,l}]}{\mathbb E[\ell_l]}-\hat r_j
\right|
\le 3\xi.
\]
\end{lemma}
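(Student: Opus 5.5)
The plan is to decompose the honest cycle into the phase block of length $T$ and a possible punishment block of length $K$. Let $\pi$ be the probability, under honest play, that a frequency deviation is declared at the end of the block. Under honest play no excess deviation can occur: each LLM $q\neq l$ plays $\overline M^{h,q}$, so every realized aggregate satisfies $H^l_{h,s}(i,a)\le \Gamma_{h,l}(i,a)$ by \eqref{eq:tot-linear-lean}. So the only event to control is $d^l>\tau$. Then $\mathbb E[\ell_l]=T+\pi K$ and $\mathbb E[R_{j,l}]=\mathbb E[R^{\mathrm{blk}}_{j,l}]+\pi\,\mathbb E[R^{\mathrm{pun}}_{j,l}]$, where $|R^{\mathrm{pun}}_{j,l}|\le KU^\star$.

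First I bound $\pi$. A period is discrepant only if $l$ probes, which happens independently with probability $p$ each period. Hence $Td^l$ is stochastically dominated by a $\mathrm{Bin}(T,p)$ variable. Since $\tau=3p$, Hoeffding's inequality gives
\[
\pi\le \Pr[\mathrm{Bin}(T,p)>3pT]\le e^{-2(2p)^2T}\le e^{-2p^2T}.
\]

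Second I bound the block payoff. In each period of sub-block $h$ where $l$ does not probe, $j$ receives exactly $U_j(\bm{\overline M}^h)$. In a probing period the payoff differs by at most $\Delta$. So
\[
\Bigl|\mathbb E[R^{\mathrm{blk}}_{j,l}]-\sum_h T_hU_j(\bm{\overline M}^h)\Bigr|\le \Delta pT\le \Delta\tau T\le \xi T .
\]
The first line of \eqref{eq:large-T-lean} together with the choice of weights gives $\bigl|\sum_h T_hU_j(\bm{\overline M}^h)-T\hat r_j\bigr|\le \xi T$. Therefore $\bigl|\mathbb E[R^{\mathrm{blk}}_{j,l}]/T-\hat r_j\bigr|\le 2\xi$.

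Third, I combine the two bounds. Write
\[
\frac{\mathbb E[R_{j,l}]}{T+\pi K}-\hat r_j=\frac{\bigl(\mathbb E[R^{\mathrm{blk}}_{j,l}]-T\hat r_j\bigr)+\pi\bigl(\mathbb E[R^{\mathrm{pun}}_{j,l}]-K\hat r_j\bigr)}{T+\pi K}.
\]
The denominator is at least $T$. The first numerator term contributes at most $2\xi$. The second contributes at most $\pi\cdot 2KU^\star/T$, since $|\hat r_j|\le U^\star$. Using $K=\lceil cT\rceil\le (c+1)T$, this is at most $2U^\star(1+c)e^{-2p^2T}$. That is at most $\xi$ by the second line of \eqref{eq:large-T-lean}, whose factor $(2+c)$ is more than enough. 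Summing gives the claimed bound $3\xi$.

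The main obstacle is the probing-period term. If $\xi$ is tight, I would use the sharper bound $\Delta p$ rather than $\Delta\tau$. I must also check that the first line of \eqref{eq:large-T-lean} indeed bounds the rounding error by $\xi$. The remaining steps are routine bookkeeping.
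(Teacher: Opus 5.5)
Your proposal is correct and follows the paper's proof essentially step for step: it bounds the same three error terms (the probing error $\Delta pT$, the rounding error $\xi T$, and the false-punishment term $2U^\star e^{-2p^2T}K$) and divides by $\mathbb E[\ell_l]\ge T$ using $K\le (c+1)T$. The differences are minor: you apply Hoeffding directly at the threshold $3pT$ rather than $2pT$, and you explicitly check that no excess deviation can fire on the honest path, which the paper leaves implicit.
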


\begin{proof}
Let $P$ be the number of probe time periods in that block. Then $P\sim\mathrm{Binomial}(T,p)$, so Hoeffding's inequality gives
\[
\Pr\!\bigl(P>2pT\bigr)\le e^{-2p^2T}.
\]
Under honest play every discrepancy is created by probing. Hence if $P\le 2pT$, then
\[
d^l\le \frac{P}{T}\le 2p<\tau,
\]
so no punishment occurs. Therefore a false punishment follows an honest phase-$l$ block with probability at most $e^{-2p^2T}$.

During the phase block, probing changes payoff by at most $\Delta$ on each probe time period, so the expected probing error is at most $\Delta pT$.
By the choice of the schedule, the plain average payoff of the scheduled $T$-block is within $\xi$ of $\hat r_j$.
If a false punishment occurs, the total reward over the punishment block differs from $K\hat r_j$ by at most $2U^\star K$.
Therefore
\[
\left|\mathbb E[R_{j,l}]-\hat r_j\,\mathbb E[\ell_l]\right|
\le
\Delta pT+\xi T+2U^\star e^{-2p^2T}K.
\]
Since $\mathbb E[\ell_l]\ge T$ and $K/T\le c+1$, we obtain
\[
\left|
\frac{\mathbb E[R_{j,l}]}{\mathbb E[\ell_l]}-\hat r_j
\right|
\le
\Delta p+\xi+2U^\star e^{-2p^2T}(c+1)
\le
3\xi,
\]
by the parameter choice above.
\end{proof}

\begin{lemma}[On-path discounted payoffs]
\label{lem:on-path-lean}
There exists a cutoff below $1$ such that for every $\delta$ above that cutoff, every phase index $l$, and every LLM $j$,
\[
\left|
U_j^\delta(S^{p,T,K}\mid \text{start in phase }l)-\hat r_j
\right|
\le
4\xi.
\]
In particular,
\[
\left|U_j^\delta(S^{p,T,K})-r_j\right|\le \gamma.
\]
\end{lemma}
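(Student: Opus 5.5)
Under honest play the public state evolves as a renewal process: starting from a fresh phase-$l$ block, no excess deviation can ever occur. Under honest play only $l$ changes its meta-action, so $H^l_{h,s}(i,a)\le\Gamma_{h,l}(i,a)$ always. Hence the phase index never changes on path, and the continuation from a fresh phase-$l$ block consists of i.i.d.\ honest cycles, each being a $T$-block possibly followed by a $K$-punishment block. The plan is to write the discounted value as a renewal equation over these cycles and compare it with the plain per-cycle average of Lemma~\ref{lem:on-path-cycle-lean}.

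Concretely, let $V_l:=U_j^\delta(S^{p,T,K}\mid\text{start in phase }l)$. Conditioning on the first cycle gives
\[
V_l=\mathbb E\Big[(1-\delta)\textstyle\sum_{s=1}^{\ell_l}\delta^{s-1}x_s\Big]+\mathbb E[\delta^{\ell_l}]\,V_l,
\qquad\text{so}\qquad
V_l=\frac{\mathbb E\big[(1-\delta)\sum_{s=1}^{\ell_l}\delta^{s-1}x_s\big]}{1-\mathbb E[\delta^{\ell_l}]}.
\]
Since $\ell_l$ takes only the two values $T$ and $T+K$, both numerator and denominator are finite sums over at most $T+K$ periods with bounded payoffs $|x_s|\le U^\star$. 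After dividing numerator and denominator by $1-\delta$, the denominator tends to $\mathbb E[\ell_l]$ as $\delta\uparrow1$, and the numerator's summand $\mathbb E\big[\sum_{s\le\ell_l}\delta^{s-1}x_s\big]$ tends to $\mathbb E[R_{j,l}]$. The convergence is uniform over payoff sequences in $[-U^\star,U^\star]$, by the same finite-sum argument as in Lemma~\ref{lem:discount-cycle-lean} applied with $S=T+K$. Hence, for $\delta$ above a cutoff,
\[
\big|V_l-\mathbb E[R_{j,l}]/\mathbb E[\ell_l]\big|\le\xi,
\]
and Lemma~\ref{lem:on-path-cycle-lean} then gives $|V_l-\hat r_j|\le4\xi$. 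Since $L$ is finite, we take the maximum of the cutoffs over $l$ and $j$.

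For the ``in particular'' statement, the profile starts in phase $1$, so
\[
|U_j^\delta(S^{p,T,K})-r_j|\le4\xi+\|\hat r-r\|_\infty\le5\xi<\gamma.
\]

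The main obstacle is making the renewal step precise. The number of periods until the next cycle begins is random, and the payoffs within a cycle depend on the probe realizations. The key facts are that the probe draws are independent across periods and that the public state resets at the start of each fresh block, so cycles are genuinely i.i.d. The ratio convergence also needs some care: I would bound
\[
\Big|\frac{1-\delta}{1-\delta^{S}}-\frac1S\Big|\cdot(\text{stuff})
\]
directly, or argue via the elementary inequality $|a/b-a'/b'|\le|a-a'|/b+|a'|\,|b-b'|/(bb')$ using $b,b'\ge T$.
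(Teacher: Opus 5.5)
Your proposal is correct and follows essentially the same route as the paper: the renewal identity over i.i.d.\ honest cycles, the limit $(1-\delta)$-normalized value $\to\mathbb E[R_{j,l}]/\mathbb E[\ell_l]$ using $\ell_l\le T+K$, a common cutoff over the finitely many pairs $(j,l)$, and then Lemma~\ref{lem:on-path-cycle-lean} together with $\|\hat r-r\|_\infty\le\xi$ to reach $5\xi<\gamma$. Your explicit check that $H^l_{h,s}(i,a)\le\Gamma_{h,l}(i,a)$ on path is worth keeping, since it shows the phase never advances and the cycles are genuinely i.i.d.; the paper uses this but leaves it implicit.
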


\begin{proof}
Let
\[
W_{j,l}^\delta
:=
\mathbb E\!\left[\sum_{t\ge 1}\delta^{t-1}U_j(\bm M_t)\,\middle|\,\text{start in phase }l\right].
\]
Because honest cycles are i.i.d., if $G_{j,l}^\delta$ denotes the discounted reward within the first cycle, then
\[
W_{j,l}^\delta
=
G_{j,l}^\delta+\mathbb E[\delta^{\ell_l}]\,W_{j,l}^\delta.
\]
Hence
\[
(1-\delta)W_{j,l}^\delta
=
(1-\delta)\frac{G_{j,l}^\delta}{1-\mathbb E[\delta^{\ell_l}]}.
\]
Since $\ell_l\le T+K$ almost surely, we have
\[
G_{j,l}^\delta\to \mathbb E[R_{j,l}]
\qquad\text{and}\qquad
\frac{1-\mathbb E[\delta^{\ell_l}]}{1-\delta}\to \mathbb E[\ell_l]
\qquad
(\delta\uparrow 1),
\]
so
\[
\lim_{\delta\uparrow 1}(1-\delta)W_{j,l}^\delta=\frac{\mathbb E[R_{j,l}]}{\mathbb E[\ell_l]}.
\]
Since the set of pairs $(j,l)\in L\times L$ is finite, there exists a cutoff below $1$ such that for every phase index $l$, every LLM $j$, and every $\delta$ above that cutoff,
\[
\left|
U_j^\delta(S^{p,T,K}\mid \text{start in phase }l)-\frac{\mathbb E[R_{j,l}]}{\mathbb E[\ell_l]}
\right|
\le
\xi
\]
Combining this with Lemma~\ref{lem:on-path-cycle-lean} yields, for every such $\delta$,
\[
\left|
U_j^\delta(S^{p,T,K}\mid \text{start in phase }l)-\hat r_j
\right|
\le
4\xi.
\]
Since the initial public state is the start of a fresh phase-$1$ block, we also have
\[
\left|U_j^\delta(S^{p,T,K})-\hat r_j\right|\le 4\xi.
\]
Using $|\hat r_j-r_j|\le \xi$ yields
\[
\left|U_j^\delta(S^{p,T,K})-r_j\right|
\le
\left|U_j^\delta(S^{p,T,K})-\hat r_j\right|
+
\left|\hat r_j-r_j\right|
\le
5\xi
<
\gamma.
\]
\end{proof}

\begin{lemma}[Changing few time periods in a fixed block]
\label{lem:theta-lean}
Fix integers $T\ge 1$ and $q\in\{0,\dots,T\}$, and a tolerance $\zeta>0$.
Then there exists a cutoff below $1$ such that if two payoff sequences $x_1,\dots,x_T$ and $y_1,\dots,y_T$ lie in $[-U^\star,U^\star]$, satisfy $|x_s-y_s|\le \Delta$ for all $s$, and differ on at most $q$ indices, then for every $\delta$ above that cutoff,
\[
\left|
\frac{1-\delta}{1-\delta^T}\sum_{s=1}^T \delta^{s-1}(x_s-y_s)
\right|
\le
\Delta\frac{q}{T}+\zeta.
\]
\end{lemma}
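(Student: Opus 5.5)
The plan is to split the normalized discounted sum into a plain average plus a correction term, and bound the correction using Lemma~\ref{lem:discount-cycle-lean}. Set $z_s:=x_s-y_s$. By hypothesis $|z_s|\le \Delta$ for every $s$, and $z_s=0$ outside a set $Q\subseteq\{1,\dots,T\}$ with $|Q|\le q$. Write the weights as $\omega_s(\delta):=\frac{(1-\delta)\delta^{s-1}}{1-\delta^T}$. Then
\[
\frac{1-\delta}{1-\delta^T}\sum_{s=1}^T\delta^{s-1}z_s
=\frac1T\sum_{s=1}^T z_s+\sum_{s=1}^T\Bigl(\omega_s(\delta)-\frac1T\Bigr)z_s .
\]
The first term is at most $\frac1T\sum_{s\in Q}|z_s|\le \Delta\frac{q}{T}$ in absolute value.

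For the second term, note that the proof of Lemma~\ref{lem:discount-cycle-lean} shows that each $\omega_s(\delta)\to 1/T$ as $\delta\uparrow1$. There are only $T$ such weights, and $T$ is a fixed integer. So there is a cutoff below $1$ such that for all $\delta$ above it,
\[
\sum_{s=1}^T\bigl|\omega_s(\delta)-\tfrac1T\bigr|\le \zeta/\Delta
\]
(if $\Delta=0$ all $z_s=0$ and the claim is trivial). The second term is then bounded by $\Delta\cdot\zeta/\Delta=\zeta$.

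This cutoff depends only on $T$, $\zeta$ and $\Delta$. It does not depend on the particular sequences or on the set $Q$, so it is uniform over all admissible pairs $(x,y)$, as the statement requires.

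An alternative route is to apply Lemma~\ref{lem:discount-cycle-lean} directly to the sequence $z_s$. Since $z_s\in[-\Delta,\Delta]$, one should check whether $\Delta\le 2U^\star$ fits the range $[-U^\star,U^\star]$ used in that lemma. It is cleaner to rerun its short argument with bound $\Delta$ in place of $U^\star$, as done above. Apart from this uniformity bookkeeping there is no real obstacle; the only point needing care is keeping the cutoff independent of the sequences.
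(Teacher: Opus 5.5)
Your proof is correct, and it takes a different route from the paper's. The paper bounds the sum by $\Delta\frac{(1-\delta)\sum_{s\in I}\delta^{s-1}}{1-\delta^T}$, where $I$ is the set of differing indices. Because $\delta^{s-1}$ is decreasing, the worst case is $I=\{1,\dots,q\}$, which gives the closed form $\Delta\frac{1-\delta^q}{1-\delta^T}\to\Delta q/T$.

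You instead split off the plain average, which is bounded by $\Delta q/T$ exactly. You then control the remainder by $\Delta\sum_{s=1}^T|\omega_s(\delta)-1/T|$, which is the same weight convergence already used in Lemma~\ref{lem:discount-cycle-lean}.

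The paper's route yields an explicit worst-case expression, with a cutoff determined by $q$, $T$ and $\zeta/\Delta$. Yours needs no monotonicity or rearrangement step. Its cutoff does not even depend on $q$, and it would work for any normalized weights that converge to uniform.

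On your alternative route, you can sidestep the range concern. Apply Lemma~\ref{lem:discount-cycle-lean} to $x$ and to $y$ separately, each with tolerance $\zeta/2$, and subtract. This is legitimate because both sequences do lie in $[-U^\star,U^\star]$.
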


\begin{proof}
Let $I:=\{s\le T:x_s\neq y_s\}$. Then $|I|\le q$, so
\[
\left|
\frac{1-\delta}{1-\delta^T}\sum_{s=1}^T \delta^{s-1}(x_s-y_s)
\right|
\le
\Delta\frac{(1-\delta)\sum_{s\in I}\delta^{s-1}}{1-\delta^T}.
\]
Because $\delta^{s-1}$ decreases in $s$, the right-hand side is maximized when $I=\{1,\dots,q\}$. Hence
\[
\Delta\frac{(1-\delta)\sum_{s\in I}\delta^{s-1}}{1-\delta^T}
\le
\Delta\frac{1-\delta^q}{1-\delta^T}.
\]
As $\delta\uparrow 1$, the last expression converges to $\Delta q/T$, so it is at most $\Delta q/T+\zeta$ for all $\delta$ sufficiently close to $1$.
\end{proof}

\begin{lemma}[Deviation gain in pre-$o$ phases]
\label{lem:deviation-pre-lean}
There exists a cutoff below $1$ such that for every $\delta$ above that cutoff, every LLM $o\in L$, and every unilateral deviation $S'_o$, the contribution to
\[
U_o^\delta(S'_o,S_{-o})-U_o^\delta(S_o,S_{-o})
\]
coming from pre-$o$ phases and their immediate punishment blocks is at most $4\xi$.
\end{lemma}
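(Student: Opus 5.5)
The plan is to bound LLM $o$'s gain block by block inside each phase $l<o$. I would split the blocks of phase $l$ into \emph{clean} blocks, in which $o$ plays its prescription $\overline M^{h,o}$ in every period, and \emph{dirty} blocks, in which $o$ departs from the prescription in at least one period. Since only LLM $l$ randomizes and $o$'s departures are measurable with respect to $o$'s private history, a clean block is distributed exactly as an honest phase-$l$ cycle. It therefore contributes nothing to the gain except through where the protocol stands afterwards. Concretely, I would write
\[
U_o^\delta(S'_o,S_{-o})-U_o^\delta(S_o,S_{-o})
\]
as a sum over cycles of discounted one-cycle differences plus continuation terms. The continuation terms are controlled by Lemma~\ref{lem:on-path-lean}: from the start of any phase, the honest continuation value of $o$ lies within $4\xi$ of $\hat r_o$. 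So whether the protocol is still in phase $l$ or has moved to $l+1$ changes $o$'s continuation by at most a $\xi$-order amount, and this is where the $4\xi$ budget enters.

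The key step is a detection lemma. Suppose that in some period of block $h$ of phase $l$, LLM $o$ plays $M'\neq \overline M^{h,o}$ in a way that changes the public aggregate. Since each $M_i^o$ is a probability distribution over $A_i$, some pair $(i,a)$ with $p_i^o>0$ must satisfy $M_i'^{o}(a)>\overline M_i^{h,o}(a)$. With probability at least $p/|\mathcal A|$, independently of $o$'s information, LLM $l$ probes with an extreme profile $\bm a$ that has $a_i=a$, so it puts its whole mass $p_i^l$ on $a$. In that period
\[
H(i,a)=\sum_{q\neq l,o}p_i^q\overline M_i^{h,q}(a)+p_i^oM_i'^{o}(a)+p_i^l>\Gamma_{h,l}(i,a),
\]
which is an excess deviation. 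This holds provided $p_i^l>0$ for that role. If $p_i^l=0$, then $\Gamma_{h,l}(i,a)=\bar H^h(i,a)$ and any increase is flagged immediately. Hence every dirty block ends phase $l$ with probability at least $p/|\mathcal A|$, and phase $l$ contains at most $|\mathcal A|/p$ dirty blocks in expectation.

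It remains to handle deviations that are invisible in the aggregate, and the cost of dirty blocks themselves. A departure that leaves $H$ unchanged also leaves every client's realized payoff distribution unchanged, by \eqref{eq:tot-linear-lean} and the product form \eqref{eq:compact-U} restricted to $o$'s own clients. So I would treat such departures as clean. A dirty block, together with a possible punishment block of LLM $l$ that it triggers, lasts at most $T+K$ periods and changes $o$'s payoff by at most $\Delta$ per period. Its discounted weight is therefore at most $(1-\delta)(T+K)\Delta$. Summing over the at most $k$ pre-$o$ phases gives an expected dirty-block contribution of at most
\[
k\,\frac{|\mathcal A|}{p}\,(1-\delta)(T+K)\Delta ,
\]
which is below $\xi$ once $\delta$ is close to $1$, because $p$, $T$ and $K$ are fixed before $\delta$.

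The step I expect to be the main obstacle is the punishment of $l$ triggered by $o$: by raising $d^l$ above $\tau$, $o$ can steer play into a $K$-period block that minmaxes $l$, where $o$'s own payoff may exceed $\hat r_o$. My first attempt is to note that such a punishment requires a dirty block, which is already covered by the per-block bound above. Honest frequency-triggered punishments occur with probability at most $e^{-2p^2T}$ and are already built into the on-path value, via the term $2U^\star e^{-2p^2T}(2+c)\le\xi$. Collecting terms, I would aim for a bound of order $\xi$ from the dirty blocks, plus $3\xi$ from comparing the continuation values at the phase change (Lemmas~\ref{lem:on-path-cycle-lean} and~\ref{lem:on-path-lean}), for a total of at most $4\xi$.
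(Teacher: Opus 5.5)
Your route differs from the paper's, and most of it is sound. The single-coincidence detection is correct. A probe of $l$ with $a_i=a$ gives $H(i,a)-\Gamma_{h,l}(i,a)=p_i^o\bigl(M_i^{\prime o}(a)-\overline M_i^{h,o}(a)\bigr)>0$, so your proviso on $p_i^l$ is unnecessary. Hence every dirty block advances the phase with probability at least $p/|\mathcal A|$. The stage-payoff effect of all dirty pre-$o$ blocks, including any punishment of $l$ that $o$ provokes, then vanishes as $\delta\uparrow 1$. This is cleaner than the paper's proof. The paper splits blocks into light/heavy and probe-good/probe-bad and uses detection only for heavy blocks. It pays $2\xi$ for light probe-good blocks (Lemma~\ref{lem:theta-lean}) and $\xi$ for probe-bad ones (Hoeffding). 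One wording fix: a clean block is not literally distributed as an honest block when $o$'s within-block choices react to observed probes. What holds, and what you need, is that on the clean event the block coincides pathwise with the honest block driven by the same probe draws.

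The gap is the continuation bookkeeping, which in your decomposition has to carry the whole $4\xi$. Write $V_l$ for $o$'s honest value from a fresh phase-$l$ block; each phase advance then contributes $\delta^{t_m}(V_{l_m+1}-V_{l_m})$. Lemmas~\ref{lem:on-path-cycle-lean} and~\ref{lem:on-path-lean} only place each cycle ratio within $3\xi$, and each $V_l$ within $4\xi$, of $\hat r_o$. That yields $|V_{l+1}-V_l|\le 8\xi$, not the $3\xi$ you claim. There can also be up to $k-1$ advances before phase $o$, so per-advance bounds would accumulate to order $k\xi$. Two further steps are needed:
\begin{itemize}
\item Pathwise, the advance times increase, so the weights $\delta^{t_m}$ are nonincreasing. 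Abel summation then bounds the total by $\max_l V_l-\min_l V_l$.
\item Compare the $V_l$ directly. Honest play from different phases differs only in who probes and whose false punishments occur. This gives $|V_l-V_{l'}|\le 2\Delta p+4U^\star e^{-2p^2T}(c+1)$ plus a term vanishing as $\delta\uparrow1$, which is below $3\xi$ given $3\Delta p\le\xi$ and $2U^\star e^{-2p^2T}(2+c)\le\xi$.
\end{itemize}
With these additions your argument proves the lemma; as written it supports only about $8\xi$. The paper's proof does not carry this term in the present lemma at all. It measures only stage payoffs on pre-$o$ blocks against the honest path, and the slack between honest continuation values and $\hat r_o$ is charged to Lemma~\ref{lem:deviation-own-lean} through the uniform benchmark $\hat r_o-4\xi$.
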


\begin{proof}
Fix a deviator $o\in L$ and a unilateral deviation $S'_o$. Relabel the phases cyclically so that the current search order is $1,2,\ldots,o$.

Consider a phase-$l$ block with $l<o$.
Let $X_B$ be the number of time periods in that block on which $o$'s meta-action differs from the prescribed one on some role $i$ with $p_i^o>0$.
Let $P_B$ be the number of probe time periods of the honest prober $l$.
Call the block:
\begin{itemize}
    \item \emph{light} if $X_B\le pT$;
    \item \emph{heavy} if $X_B>pT$;
    \item \emph{probe-good} if $P_B\le 2pT$;
    \item \emph{probe-bad} otherwise.
\end{itemize}
Again,
\[
\Pr(\text{probe-bad}\mid \text{history before the block})\le e^{-2p^2T}.
\]

If a pre-$o$ block is light and probe-good, then
\[
d^l\le \frac{P_B+X_B}{T}\le 3p=\tau,
\]
so punishment is not triggered.
Relative to the honest path, the deviator can change payoffs only on at most $pT$ time periods, each by at most $\Delta$.
Applying Lemma~\ref{lem:theta-lean} with $q=\lceil pT\rceil$ and tolerance $\xi/2$, the discounted gain from such a block is at most the block's own normalized weight times
\[
\Delta\frac{\lceil pT\rceil}{T}+\frac{\xi}{2}\le 2\xi
\]
for all $\delta$ sufficiently close to $1$, by \eqref{eq:large-T-lean}.

Now consider a heavy pre-$o$ block. On each deviating time period there exists some role-action pair $(i,a)$ such that
\[
M_i^{\prime o}(a)>\overline M_i^{h,o}(a)
\]
for the currently scheduled block $h$.
If on that time period the honest prober $l$ probes and draws a meta-action that plays action $a$ in role $i$, then by \eqref{eq:tot-linear-lean},
\[
H^l_{h,s}(i,a)>\Gamma_{h,l}(i,a),
\]
so rule (1) advances the phase.
The probability of this event on a given deviating time period is at least $p/|A_i|$. Since $R$ and the action sets are finite, this probability admits a strictly positive lower bound that is uniform over roles.
Because a heavy block contains at least $\lceil pT\rceil$ deviating time periods and the probing draws are independent across time periods, there exists a constant $\rho>0$ such that every heavy pre-$o$ block advances the phase with probability at least $\rho$.
It follows that the number of heavy pre-$o$ blocks before the process reaches phase $o$ is stochastically dominated by a sum of at most $k-1$ geometric random variables with mean $1/\rho$.
If $N_{\mathrm{heavy}}$ is the total number of time periods contained in heavy pre-$o$ blocks together with their immediate punishment blocks, then
\[
\mathbb E[N_{\mathrm{heavy}}]\le \frac{(k-1)(T+K)}{\rho}.
\]
Hence their total discounted contribution is at most
\[
2U^\star(1-\delta)\,\frac{(k-1)(T+K)}{\rho},
\]
which tends to $0$ as $\delta\uparrow 1$.

Finally, consider probe-bad pre-$o$ blocks.
Associate to each such block the chunk consisting of the $T$-block itself and the immediate punishment block if one occurs.
Its length is at most $T+K$.
If $w(B)$ is the normalized discount weight of the $T$ time periods of the block, then the weight of the associated chunk is at most
\[
\frac{1-\delta^{T+K}}{1-\delta^T}\,w(B).
\]
Since
\[
\frac{1-\delta^{T+K}}{1-\delta^T}\longrightarrow \frac{T+K}{T}\le 2+c
\qquad
(\delta\uparrow 1),
\]
there exists a cutoff below $1$ such that for every $\delta$ above that cutoff,
\[
\frac{1-\delta^{T+K}}{1-\delta^T}\le 2+c,
\qquad\text{and}\qquad
2U^\star(1-\delta)\,\frac{(k-1)(T+K)}{\rho}\le \xi.
\]
After taking that cutoff closer to $1$ if necessary, we may also assume that the light-block estimate above holds for every $\delta$ above it.
For every such $\delta$, the total normalized weight of the disjoint pre-$o$ blocks is at most $1$, so the expected contribution of all probe-bad pre-$o$ chunks is at most
\[
2U^\star e^{-2p^2T}(2+c)\le \xi.
\]
The light, probe-good pre-$o$ blocks are also disjoint, so their normalized discounted weights sum to at most $1$; summing the per-block bound above therefore gives a total contribution of at most $2\xi$.

For every such $\delta$:
\begin{itemize}
    \item light, probe-good pre-$o$ blocks contribute at most $2\xi$ in total;
    \item heavy pre-$o$ chunks contribute at most $\xi$ in expectation;
    \item probe-bad pre-$o$ chunks contribute at most $\xi$ in expectation.
\end{itemize}
Hence the total contribution of all pre-$o$ phases is at most
\[
2\xi+\xi+\xi=4\xi.
\]
\end{proof}

\begin{lemma}[Deviation gain in phase $o$]
\label{lem:deviation-own-lean}
There exists a cutoff below $1$ such that for every $\delta$ above that cutoff, every LLM $o\in L$, and every unilateral deviation $S'_o$, the contribution to
\[
U_o^\delta(S'_o,S_{-o})-U_o^\delta(S_o,S_{-o})
\]
coming from phase-$o$ blocks and their punishment cycles is at most $8\xi$.
\end{lemma}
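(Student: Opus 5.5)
We decompose phase-$o$ play into blocks. Each block is either followed by a punishment block of length $K$ (when a frequency deviation is detected) or by a fresh phase-$o$ block. A true deviator is never cleared in its own phase, so no excess deviation can be triggered. Indeed, when all $q\neq o$ play $\overline M^{h,q}$ deterministically, \eqref{eq:tot-linear-lean} gives $H^o_{h,s}(i,a)\le \Gamma_{h,o}(i,a)$ for every $h,s,i,a$. Once phase $o$ is reached, the process therefore consists of an alternating sequence of $T$-blocks and possible $K$-punishment blocks, all against $o$. The plan is to bound, block by block, $o$'s gain relative to the honest continuation, and then sum using the normalized discount weights, as in Lemma~\ref{lem:deviation-pre-lean}.

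\textbf{Unpunished blocks.} Consider a phase-$o$ block with $d^o\le\tau$. Then at most $\tau T$ periods are discrepant. On non-discrepant periods the public aggregate equals $\bar H^h$. Since only $o$ deviates, $o$'s induced masses $M_i^{o}(a)$ coincide with $\overline M_i^{h,o}(a)$ on every role with $p_i^o>0$, and by multilinearity of \eqref{eq:compact-U} in action masses, all payoffs on that period equal the scheduled ones. So the deviator's payoff differs from the scheduled (non-probing) payoff on at most $\lceil\tau T\rceil$ periods, each by at most $\Delta$. Lemma~\ref{lem:theta-lean} with tolerance $\xi/2$ and \eqref{eq:large-T-lean} then bound the normalized gain by $\tfrac32\xi+\tfrac12\xi=2\xi$. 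One further term remains: honest play contains probes costing at most $\Delta p\le\xi$ per unit weight, so the comparison with the honest continuation costs an additional $\le \xi$ (or $\tfrac32\xi$ via $\lceil pT\rceil$). Summing over disjoint blocks with total normalized weight at most $1$ gives a contribution of at most about $3$--$4\xi$.

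\textbf{Punished blocks.} Consider a block with $d^o>\tau$. The block plus its $K$ punishment periods gives $o$ at most $T\,U_o^{\max}+K\,\mathrm{IR}_o$, up to discounting error. By \eqref{eq:punish-ratio-lean} and $K=\lceil cT\rceil$, this is at most $(T+K)\hat r_o$. The honest continuation over the same $T+K$ periods yields at least $(T+K)(\hat r_o-3\xi)$ by Lemma~\ref{lem:on-path-cycle-lean}, and this extends to discounted values by Lemma~\ref{lem:discount-cycle-lean} applied with cycle length $T+K$ (tolerance $\xi$). Each punished chunk therefore contributes at most about $4\xi$ times its normalized weight. These chunks are disjoint from the unpunished blocks, so summing the two cases gives at most $8\xi$ overall, after choosing the cutoff for $\delta$ close enough to $1$ that all finitely many discount-approximation lemmas apply simultaneously.

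\textbf{Main obstacle.} The delicate step is the accounting in the punished case. The comparison must be against the \emph{honest continuation from the same public state}, and the honest path itself contains false punishments with probability at most $e^{-2p^2T}$; these are absorbed by the $2U^\star e^{-2p^2T}(2+c)\le\xi$ term. A second difficulty is justifying the claim that non-discrepant periods yield exactly the scheduled payoffs even when $o$ uses mixed or split instructions. We would handle this by working with induced action masses, which suffice by the role-homogeneous reduction (Lemma on role-homogeneity) and the linearity of $U$. Finally, we must check that the normalized-weight bookkeeping $\frac{1-\delta^{T+K}}{1-\delta^T}\le 2+c$ keeps the constants within $8\xi$.
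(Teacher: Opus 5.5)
Your skeleton matches the paper's. Rule (1) cannot fire in phase $o$, so unpunished blocks are handled by Lemma~\ref{lem:theta-lean} with $q=\lceil\tau T\rceil$, punished $(T+K)$-cycles by \eqref{eq:punish-ratio-lean} and Lemma~\ref{lem:discount-cycle-lean}, and the pieces carry disjoint normalized discount weights.

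Your unpunished-block comparison is actually tighter than the paper's. You benchmark against the scheduled block and charge the honest side only the probe cost $\Delta p\le\xi/3$ and the false-punishment term, which gives roughly $2\xi+\xi/3+\xi$. The paper instead passes through $\hat r_o$ twice and pays $2\xi+2\xi+4\xi=8\xi$; that is where the lemma's constant comes from.

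Your multilinearity argument for ``non-discrepant implies scheduled payoffs'' is correct, but drop the appeal to the role-homogeneous reduction. That lemma is a one-shot statement. Replacing a split instruction by a correlated role-homogeneous randomization changes the realized public aggregate, so it cannot be used to simplify a repeated-game deviator. Expanding \eqref{eq:compact-U} realization by realization is all you need.

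The punished case, however, does not go through as written. Lemma~\ref{lem:on-path-cycle-lean} is a ratio of expectations over a random-length honest renewal cycle. It is not a bound on the honest path over a fixed window of $T+K$ periods. Such a window runs $K$ periods into the next block's schedule (entirely inside $\overline{\bm M}^{1}$ if $K\le T_1$), so its plain average can lie far below $\hat r_o-3\xi$.

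Use instead the benchmark you name in your last paragraph, the honest continuation value at a fresh block.
\begin{itemize}
\item The honest continuation value is at least $\hat r_o-4\xi$ by Lemma~\ref{lem:on-path-lean}.
\item The deviator's discounted cycle average is at most $\hat r_o+\xi$, by Lemma~\ref{lem:discount-cycle-lean} with $S=T+K$. This tolerance is genuinely needed on the deviator's side, because the cycle front-loads $U_o^{\max}$.
\end{itemize}
Together these give $5\xi$ per unit weight, not $4\xi$.

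With that correction, adding the two per-case totals can exceed $8\xi$ (about $\tfrac{10}{3}\xi+5\xi$). Instead, use that the light blocks and heavy cycles are disjoint with normalized weights summing to at most $1$. The phase-$o$ contribution is then at most the larger per-unit-weight bound, $\max\{\tfrac{10}{3}\xi,5\xi\}\le 8\xi$, which is how the paper combines its two cases.
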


\begin{proof}
Fix a deviator $o\in L$ and a unilateral deviation $S'_o$. Relabel the phases cyclically so that the current search order is $1,2,\ldots,o$.
By Lemma~\ref{lem:on-path-lean}, at the start of any fresh phase-$l$ block the honest continuation payoff of $o$ satisfies
\begin{equation}
U_o^\delta(S^{p,T,K}\mid \text{start in phase }l)\ge \hat r_o-4\xi
\label{eq:honest-lower-lean}
\end{equation}
for all $\delta$ sufficiently close to $1$.

Call a phase-$o$ block \emph{light} if its discrepancy fraction is at most $\tau$, and \emph{heavy} otherwise.

If a phase-$o$ block is light, then at most $\tau T$ time periods differ from the intended public path.
By \eqref{eq:tot-linear-lean}, once the other LLMs are fixed, matching the intended public path is equivalent to matching the prescribed role-wise meta-action of $o$ on all relevant roles.
Applying Lemma~\ref{lem:theta-lean} with $q=\lceil \tau T\rceil$ and tolerance $\xi/2$, the discounted gain created by changing those time periods is at most the block's own normalized weight times
\[
\Delta\frac{\lceil \tau T\rceil}{T}+\frac{\xi}{2}\le 2\xi
\]
for all $\delta$ sufficiently close to $1$, by \eqref{eq:large-T-lean}.
By the choice of the schedule, the plain average payoff of the intended $T$-time-period block differs from $\hat r_o$ by at most $\xi$.
Applying Lemma~\ref{lem:discount-cycle-lean} with $S=T$ and tolerance $\xi$, the normalized discounted average of that intended block differs from $\hat r_o$ by at most $2\xi$ for all $\delta$ sufficiently close to $1$.
Combining this with \eqref{eq:honest-lower-lean}, the total gain from a light phase-$o$ block, relative to the honest continuation benchmark, is at most its own weight times
\[
2\xi+2\xi+4\xi=8\xi.
\]

If a phase-$o$ block is heavy, then rule (2) triggers punishment.
Since $x\mapsto (TU_o^{\max}+x\,\mathrm{IR}_o)/(T+x)$ is decreasing and $K\ge cT$, the plain average payoff of $o$ over the resulting $(T+K)$-cycle is at most
\[
\frac{TU_o^{\max}+K\,\mathrm{IR}_o}{T+K}
\le
\frac{U_o^{\max}+c\,\mathrm{IR}_o}{1+c}
\le
\hat r_o.
\]
Applying Lemma~\ref{lem:discount-cycle-lean} with $S=T+K$ and tolerance $\xi$, the discounted gain of a heavy phase-$o$ cycle, relative to the honest continuation benchmark, is at most its cycle-weight times
\[
\xi+4\xi=5\xi
\]
for all $\delta$ sufficiently close to $1$.

In phase $o$, rule (1) cannot fire because by the definition of $\Gamma_{h,o}$ and \eqref{eq:tot-linear-lean} one always has
\[
H^o_{h,s}(i,a)\le \Gamma_{h,o}(i,a)
\qquad
(\forall h,s,i,a).
\]
Hence every phase-$o$ time period belongs either to a light phase-$o$ block or to a heavy phase-$o$ cycle, and the normalized weights of these pieces sum to at most $1$.
Taking $\delta$ sufficiently close to $1$ so that all of the preceding bounds hold simultaneously, we obtain:
\begin{itemize}
    \item light phase-$o$ blocks contribute at most $8\xi$ in total;
    \item heavy phase-$o$ cycles contribute at most $5\xi$ in total.
\end{itemize}
Therefore the total contribution of all phase-$o$ blocks and their punishment cycles is at most
\[
\max\{8\xi,5\xi\}=8\xi.
\]
\end{proof}

\begin{lemma}[Any unilateral deviation gains at most $\epsilon$]
\label{lem:deviation-lean}
There exists $\bar\delta<1$ such that for every $\delta\in(\bar\delta,1)$, every LLM $o\in L$, and every unilateral deviation $S'_o$,
\[
U_o^\delta(S'_o,S_{-o})-U_o^\delta(S_o,S_{-o})\le \epsilon.
\]
\end{lemma}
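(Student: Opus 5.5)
The proof will combine the two preceding deviation lemmas by decomposing the deviator's discounted payoff along the phase structure. Fix a deviator $o$ and a unilateral deviation $S'_o$. As in the proofs of Lemma~\ref{lem:deviation-pre-lean} and Lemma~\ref{lem:deviation-own-lean}, relabel phases cyclically so that the search order starting from the current phase is $1,2,\ldots,o$. The key structural observation is that once the process reaches phase $o$, it never leaves it. Rule (1) cannot fire in phase $o$, since $H^o_{h,s}(i,a)\le\Gamma_{h,o}(i,a)$ whenever all LLMs other than $o$ follow the prescription. After a punishment, the protocol returns to phase $o$.

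Hence every time period of the play under $(S'_o,S_{-o})$ falls into exactly one of two disjoint pieces. The first is a pre-$o$ phase block, possibly with its immediate punishment block. The second is a phase-$o$ block, possibly with its punishment cycle.

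I would write
\[
U_o^\delta(S'_o,S_{-o})-U_o^\delta(S_o,S_{-o})
\]
as the sum of the contributions of these two pieces, each measured relative to the honest continuation benchmark used in the lemmas. Lemma~\ref{lem:deviation-pre-lean} bounds the pre-$o$ contribution by $4\xi$, and Lemma~\ref{lem:deviation-own-lean} bounds the phase-$o$ contribution by $8\xi$.

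Take $\bar\delta$ to be the maximum of the finitely many cutoffs these lemmas supply. This includes the cutoff from Lemma~\ref{lem:on-path-lean}, which underlies the honest benchmark (\ref{eq:honest-lower-lean}). There are finitely many because there are finitely many $o$, and the cutoffs are uniform in the deviation. For $\delta>\bar\delta$ the total gain is then at most $12\xi<\epsilon$, by the choice of $\xi$.

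The main obstacle is justifying that the two contributions add correctly, i.e.\ that the decomposition is a valid telescoping. The benchmark term $U_o^\delta(S_o,S_{-o})$ refers to honest play starting in phase $1$, whereas the deviation path enters phase $o$ at a random stopping time. I would handle this by a one-shot-deviation-style argument. Consider the stopping time $\theta$ at which phase $o$ is first entered. Up to $\theta$, compare with honest play period by period; this is what Lemma~\ref{lem:deviation-pre-lean} controls. From $\theta$ onward, the lemma's benchmark is the honest continuation from a fresh phase-$o$ block, which by Lemma~\ref{lem:on-path-lean} differs from any other honest continuation by at most $8\xi$ in normalized terms. This slack is already built into the $4\xi$ term of Lemma~\ref{lem:deviation-own-lean}.

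Concretely, I would condition on the history at $\theta$ and apply Lemma~\ref{lem:deviation-own-lean} to the continuation, weighted by $\mathbb E[\delta^\theta]$. I would apply Lemma~\ref{lem:deviation-pre-lean} to the pre-$\theta$ portion, weighted by the complementary mass. Since these weights sum to $1$, the combined bound is at most $\max$ plus sum, hence at most $12\xi$. If the benchmark mismatch costs an additional $O(\xi)$, I would absorb it by noting that $12\xi<\epsilon$ leaves room, or by shrinking $\xi$ slightly in the construction.
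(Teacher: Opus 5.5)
Your proposal is the paper's proof: take $\bar\delta$ above the cutoffs of Lemmas~\ref{lem:on-path-lean}, \ref{lem:deviation-pre-lean} and~\ref{lem:deviation-own-lean} (already uniform in $o$ and $S'_o$), use that phase $o$ is absorbing, and add $4\xi+8\xi=12\xi<\epsilon$. The paper takes the additivity of the two contributions for granted, so your stopping-time paragraph goes beyond it, but as written that paragraph is loose: it re-weights bounds that are already global contributions, the ``max plus sum'' step is confused, and shrinking $\xi$ would change the construction the lemma is about. If you want additivity made rigorous, measure every piece of the deviation path against the single lower bound $\hat r_o-4\xi\le U_o^\delta(S_o,S_{-o})$ from Lemma~\ref{lem:on-path-lean}, which is the benchmark Lemma~\ref{lem:deviation-own-lean} already uses.
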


\begin{proof}
Choose $\bar\delta<1$ so large that the conclusions of Lemmas~\ref{lem:on-path-lean}, \ref{lem:deviation-pre-lean}, and \ref{lem:deviation-own-lean} all hold for every $\delta\in(\bar\delta,1)$.
Fix $\delta\in(\bar\delta,1)$, an LLM $o\in L$, and a unilateral deviation $S'_o$.
Then Lemmas~\ref{lem:deviation-pre-lean} and \ref{lem:deviation-own-lean} imply that
\[
U_o^\delta(S'_o,S_{-o})-U_o^\delta(S_o,S_{-o})
\le
4\xi+8\xi
=
12\xi
<
\epsilon.
\]
\end{proof}

\myparagraph{Conclusion.}
Take $\bar\delta$ from Lemma~\ref{lem:deviation-lean}. By its choice, Lemma~\ref{lem:on-path-lean} gives the payoff approximation and Lemma~\ref{lem:deviation-lean} gives incentive compatibility.
Therefore, for every $\delta\in(\bar\delta,1)$, the strategy profile $S^{p,T,K}$ is an $\epsilon$-equilibrium and its normalized discounted payoff vector is within $\gamma$ of $r$.
\label{par:endOfProof}


\end{document}